\def \qmo{``}
\def \qmcsp{'' }
\def \bs{\mathbf}
\def \bold{\boldsymbol}
\newtheorem{theorem}{Theorem}
\newtheorem{remark}{Remark}
\newcommand\omicron{o}
\newcommand{\red}{\textcolor{black}}
\newcommand{\redd}{\textcolor{black}}
\def \bqmatrix{\begin{pmatrix}}
\def \eqmatrix{\end{pmatrix}}
\newcommand{\blind}{1} 
\begin{document}

\def\spacingset#1{\renewcommand{\baselinestretch}%
{#1}\small\normalsize} \spacingset{1}

\if1\blind
{
  \title{\bf Unified unconditional regression for multivariate quantiles, M-quantiles and expectiles}
  \author{\,\\
  Luca Merlo\\
    Department of Human Sciences, European University of Rome\\
    and \\
    Lea Petrella \\
    MEMOTEF Department, Sapienza University of Rome\\
    and \\
    Nicola Salvati \\
    Department of Economics and Management, University of Pisa\\
    and \\
    Nikos Tzavidis \\
    Department of Social Statistics and Demography, Southampton Statistical Sciences Research Institute, University of Southampton}
  \maketitle
} \fi

\if0\blind
{
  \bigskip
  \bigskip
  \bigskip
  \begin{center}
    {\LARGE\bf Unified unconditional regression for multivariate quantiles, M-quantiles and expectiles}
\end{center}
  \medskip
} \fi

%
%

\bigskip
\begin{abstract}
In this paper, we develop a unified regression approach to model unconditional quantiles, M-quantiles and expectiles of multivariate dependent variables exploiting the multidimensional Huber's function.
 To assess the impact of changes in the covariates across the entire unconditional distribution of the responses, we extend the work of \cite{firpo2009unconditional} by running a mean regression of the recentered influence function on the explanatory variables.
 We discuss the estimation procedure and establish the asymptotic properties of the derived estimators. A data-driven procedure is also presented to select the tuning constant of the Huber's function.
 The validity of the proposed methodology is explored with simulation studies and through an application using the Survey of Household Income and Wealth 2016 conducted by the Bank of Italy.
\end{abstract}

\noindent%
{\it Keywords:}  Influence Function, M-estimation, Multivariate Data, RIF Regression, Unconditional Partial Effect
\vfill

\newpage
\spacingset{1.9} 
\section{Introduction}\label{sec:intro}

When researchers wish to determine the effect of relevant predictors across the entire distribution of the dependent variable of interest, Quantile Regression (QR), as introduced by \cite{koenker1978regression}, plays a crucial role in providing a much more complete statistical analysis 
 compared to the classical mean regression. Indeed, it allows to model conditional quantiles of a response as a function of explanatory variables and it has been greatly exploited for the study of non-Gaussian, heavy-tailed and highly skewed data. For a detailed survey and list of references of the most used QR techniques, 
 please refer to \cite{koenker2005quantile} and \cite{koenker2017handbook}. 

 However, if one is interested in how the whole unconditional distribution of the outcome responds to changes in the covariates, QR methods would yield misleading inferences (see \citealt{firpo2009unconditional, borah2013highlighting, maclean2014application}). 
 As explained by \cite{frolich2013unconditional} in a simple example relating wages to years of education, the unconditional 90-th quantile refers to the high wage workers, whereas the 90-th quantile conditional on education refers to the high wage workers within each education class, who however may not necessarily be high earners overall. Presuming a strong positive correlation between education and wages, it may well be that the 90-th quantile among high school dropouts is lower than, say, the median of all Ph.D. graduates. The interpretation of the 90-th quantile is thus different for conditional and unconditional quantiles. From a policy perspective, while the welfare of highly educated people with relatively low wages catches little interest, the welfare of the poor, i.e., those located in the lower end of the unconditional distribution of wages, attracts a lot of attention in the political debate.
 
 \red{In medicine, the analysis of epidural analgesia on the duration of the second stage of labor (see \citealt{zhang2012causal}) 
 gives another example where researchers are interested in changes in the quantiles, $q_\tau$, of the unconditional distribution of the response $Y$, $F_Y (y)$. Obstetricians are particularly concerned about the unconditional effect $d q_\tau (\mathfrak{p}) / d\mathfrak{p}$ of increasing the proportion of patients receiving epidural analgesia, $\mathfrak{p} = Pr[X=1]$, on the $\tau$-th quantile of the unconditional distribution of the duration of second-stage labor, where $X = 1$ if the patient receives the treatment (epidural) and $X = 0$ otherwise. Unfortunately, the coefficient $\beta_\tau$ from a conditional quantile regression, $\beta_\tau = F^{-1}_Y (\tau \mid X = 1) - F^{-1}_Y (\tau \mid X = 0)$, is generally different from $d q_\tau (\mathfrak{p}) / d\mathfrak{p} = (\mbox{Pr}[Y > q_\tau \mid X = 1] - \mbox{Pr}[Y > q_\tau \mid X = 0]) / f_Y (q_\tau)$, the effect of increasing the proportion of patients receiving epidural analgesia on the $\tau$-th quantile of the unconditional distribution of $Y$.}
 There are, indeed, numerous applications of practical relevance where the ultimate research objective is the unconditional distribution of the dependent variable, 
 as in the context of the earnings disparities between different groups of workers, the effect of education on earnings, or the distributional impacts of a particular treatment in a given population (\citealt{borah2013highlighting, frolich2013unconditional, huffman2017equality, firpo2018decomposing}).
  Therefore, a number of proposals has been introduced in the literature to estimate these unconditional effects (\citealt{machado2005counterfactual, melly2005decomposition}). 
 
 Motivated by this interest, \cite{firpo2009unconditional} proposed the Unconditional Quantile Regression (UQR) approach for \red{estimating the impact of changes in the distribution of the explanatory variables on quantiles of the unconditional distribution of a dependent variable.} 
 This method builds upon the concept of Recentered Influence Function (RIF) which originates from a widely used tool in robust statistics, namely the Influence Function (IF, see \citealt{hampel1974influence, huber2009robust}). 
 In particular, the UQR of \cite{firpo2009unconditional} consists of regressing 
 the RIF of the unconditional quantile of the outcome variable on the explanatory variables using either Ordinary Least Squares (OLS), logistic regression or the nonparametric regression of \cite{newey1994asymptotic}. 

 Their approach represents an important contribution on quantile regression methods and its validity is also demonstrated by the growing scientific literature spanning from medicine, economics, social inequalities and agriculture.\\
 These studies, however, focus on a univariate regression framework. \red{In the analysis of multivariate data, univariate approaches are not appropriate for this purpose, as they provide only partial pictures of the phenomenon under investigation. In these cases, the research interest may focus not only on a regression model for each outcome, but also on accounting for the dependence structure between the responses.} When the problem under investigation involves multivariate dependent variables, the method of \cite{firpo2009unconditional} cannot be easily extended to higher dimensions due to the non existence of a natural ordering in a $p$-dimensional space, $p>1$ (see \citealt{serfling2002quantile, kong2012quantile, koenker2017handbook, petrella2019joint, merlo2022marginal}). 
 
With this paper, we contribute to the current literature extending the univariate UQR approach of \cite{firpo2009unconditional} to a more general multivariate setting.  Particularly, we propose to employ the multidimensional Huber's function defined in \cite{hampel2011robust} to build a unified unconditional regression approach that encompasses multivariate quantiles, M-quantiles (\citealt{breckling1988m}) and expectiles (\citealt{newey1987asymmetric}).
\\
In the statistical literature, the Huber's function has been used to define the M-quantile for robust modeling of the entire distribution of univariate response variables, extending the ideas of M-estimation of \cite{huber1964} and \cite{huber2009robust}.
 This method provides a ``quantile-like'' generalization of the mean based on influence functions that combines in a common framework the robustness and efficiency properties of quantiles and expectiles, depending on the choice of the Huber's tuning constant; \red{the latter offering higher estimation efficiency and computational advantages compared with the former when there are no outliers in the data.} In the multivariate framework, the multidimensional Huber's function (\citealt{hampel2011robust}) 
 has been exploited by \cite{breckling1988m} 
 to define the multivariate M-quantile using a direction vector in the Euclidean $p$-dimensional space in order to establish a suitable ordering procedure for multivariate observations. 
 Subsequently, \cite{kokic2002new} generalized their definition by introducing a class of multivariate M-quantiles based on weighted estimating equations, which includes multivariate quantiles and expectiles depending on the value of the tuning constant. \red{This proposal provides a robust technique for summarizing the distribution of multidimensional data and overcomes the shortcomings of the definitions of multivariate geometric or spatial quantiles (\citealt{chaudhuri1996geometric}) and expectiles (\citealt{herrmann2018multivariate}) when extremes observations are of interest; see \cite{girard2017intriguing}.}

In our paper, we rely on the \cite{kokic2002new} approach using the multidimensional Huber's function to model 
unconditional quantiles, M-quantiles and expectiles of multivariate response variables in a unified regression framework, by choosing the tuning constant appropriately. 
\\
 In order to analyze the impact of changes in the distribution of explanatory variables on the entire unconditional distribution of the responses, following \cite{firpo2009unconditional}, we regress the RIF of the proposed model on the covariates, producing the Unconditional Quantile, M-Quantile and Expectile Partial Effect (UQPE, UMQPE, UEPE) according to the selected value of the tuning constant. From the theoretical point of view, we 
 establish the asymptotic properties of the corresponding estimators using the Bahadur representation (\citealt{bahadur1966note}). 
 Furthermore, we propose a data-driven method based on cross-validation for selecting the tuning constant that accounts for possible outliers in the data.

Using simulation studies, we illustrate the finite sample properties of the proposed methodology 
 under different data generating processes. 
 From an empirical standpoint, we demonstrate the usefulness of this method through the analysis of the Survey on Household Income and Wealth (SHIW) 2016 conducted by the Bank of Italy. 
 In particular, we fit the proposed model to evaluate the effect of economic and socio-demographic characteristics of Italian households on the unconditional distributions of family wealth and consumption, accounting both for the correlation between the outcomes and influential observations in the sample. 
  The proposed multivariate approach allows us to consider consumption and wealth as part of a collective framework and it can be of great interest to investigate \red{the unconditional effect of covariates on families' spending and wealth, with particular emphasis on those with jointly low or high consumption and wealth levels.} 
  

   
  


 The remainder of the paper is organized as follows. In Section \ref{sec:pre}, we revise the RIF and its properties. Section \ref{sec:ummq} introduces the proposed unconditional regression model for multivariate response variables and provides a detailed discussion of the asymptotic properties of the introduced estimators. Finally, the empirical application is presented in Section \ref{sec:app}, while Section \ref{sec:con} concludes. The simulation study and all the proofs are provided in the Supplementary Materials.

\section{Notation and preliminary results}\label{sec:pre}
In this section, we present the main notation and concepts which we use throughout the paper. Specifically, we review the notion of Recentered Influence Function (RIF) which originates from the Influence Function (IF) of \cite{hampel1974influence}. \red{Then, we present the Unconditional Partial Effect (UPE) 
 introduced by \cite{firpo2009unconditional} that leads} us to analyze \red{the impact of changes in the distribution of covariates} on the unconditional distribution of the response variable.


Let ${\bs Y}$ denote a \red{vector-valued} random variable belonging to an arbitrary sample space $\mathcal{Y}$, \red{which can be either a subset or equal to $\mathbb{R}^p$}, with absolutely continuous distribution function $F_{\bs Y}$ and consider a vector-valued functional $\nu (F_{\bs Y})$ where $\nu:\mathcal{F}_{\nu} \rightarrow \mathbb{R}^p$, such that $\mathcal{F}_{\nu}$ is the collection of all distributions on $\mathcal{Y}$ for which $\nu$ is defined. \red{The functional $\nu$ can belong to a wide class of distributional statistics. For example, $\nu$ can be a location parameter characterizing the distribution of $\bs Y$, a measure of scatter, as well as many inequality measurements such as concentration functions.}

The IF allows to study the effect of an infinitesimal contamination in the underlying distribution $F_{\bs Y}$ at a point $\bs y$ on the statistic $\nu (F_{\bs Y})$ we are interested in. Let us consider $\Delta_{\bs y}$ the probability measure that puts mass 1 at the value $\bs y$ and let $F_{\bs Y,t \Delta_{\bs y}} = (1 - t) F_{\bs Y} + t \Delta_{\bs y}$ represent the mixing distribution with $t \in [0, 1]$. Following \cite{hampel2011robust}, the $p$-dimensional IF of $\nu(F_{\bs Y})$ is defined as:
\begin{equation}\label{eq:IF}
IF(\bs y; \nu) = \lim_{t \rightarrow 0} \frac{\nu (F_{\bs Y,t \Delta_{\bs y}}) - \nu(F_{\bs Y})}{t}.
\end{equation}
Using the definition of IF in \eqref{eq:IF}, \cite{firpo2009unconditional} considered the RIF to analyze the statistic $\nu (F_{\bs Y})$ after a perturbation of $F_{\bs Y}$ in the direction of $\Delta_{\bs y}$. In particular, the RIF is defined as the first two terms of the von Mises linear approximation (\citealt{mises1947asymptotic}) of the corresponding statistic $\nu (F_{\bs Y,t \Delta_{\bs y}})$ with $t = 1$, namely:
\begin{equation}\label{eq:RIF0}
RIF(\bs y; \nu) = \nu(F_{\bs Y}) + \int IF(\bs s; \nu) d \Delta_{\bs y} (\bs s) =  \nu(F_{\bs Y}) + IF(\bs y; \nu).
\end{equation}
\red{The RIF in \eqref{eq:RIF0} can be interpreted as a linear approximation to a possibly complex and nonlinear statistic $\nu(F_{\bs Y})$ measuring how it is affected by an infinitesimal perturbation in $F_{\bs Y}$.}


 \red{In the presence of a set of covariates $\bs X \in \mathcal{X}$, with $\mathcal{X} \subset \mathbb{R}^k$ being the support of $\bs X$, \cite{firpo2009unconditional} suggested the use of the RIF in \eqref{eq:RIF0} for analyzing the impact on $\nu(F_{\bs Y})$ due to changes in the distribution of $\bs X$, $F_\mathbf{X}$. In particular, in order to incorporate the effect of the explanatory variables, by the law of iterated expectations 
 it follows from \eqref{eq:RIF0} that:
\begin{equation}\label{eq:v0}
\nu(F_{\bs Y}) = \int RIF(\bs y; \nu) dF_{\bs Y} (\bs y) = \int \mathbb{E}[RIF(\bs Y; \nu) \mid \mathbf{X} = \bs {x}] dF_\mathbf{X} (\mathbf{x}),
\end{equation}
where in the first equality we used the fact that $\int IF(\bs y; \nu) d F_{\bs Y} (\bs y) = \bs 0$ (see \cite{hampel2011robust}, p. 226) and in the second one we substituted in $F_{\bs Y} (\bs y) = \int F_{\bs Y \mid \mathbf{X}} (\bs y \mid \mathbf{x}) dF_\mathbf{X} (\mathbf{x})$.}
\\
  From \eqref{eq:v0} it can be seen that when one is interested in the impact of a change in the covariates $\bs X$ on a specific distributional statistic $\nu (F_{\bs Y})$, the $\mathbb{E}[RIF(\bs Y; \nu) \mid {\bs X} = \bs {x}]$ can be modeled as a function of $\bs X$, which can be easily implemented using regression methods for the conditional mean (see \citealt{firpo2009unconditional, firpo2018decomposing} and \citealt{Rios2020}).
 \redd{More formally, in this work we are mainly interested in a small location shift $t$ in the distribution of covariates $\bs X$ from $F_{\bs X}$ to the distribution $G_{\bs X}$ of the $k$-dimensional vector $\tilde{\bs X}$, where $\tilde{X}_l = X_l$ for $l \neq j$, $l = 1,\dots,k$, and $\tilde{X}_j = X_j + t$. In this case, let $\boldsymbol{\alpha}_j (\nu)$ denote the partial effect of a small change in the distribution of covariates from $F_{\bs X}$ to $G_{\bs X}$ on the functional $v(F_{\bs Y})$ and let $\boldsymbol{\alpha} (\nu) = (\boldsymbol{\alpha}_1 (\nu), \dots, \boldsymbol{\alpha}_k (\nu))$ be the matrix collecting all $j$ entries. Under the assumption that the conditional distribution of $\bs Y$ given $\bs X$, $F_{\bs Y \mid{\bs X}}$, is unaffected by changes in the law of $\bs X$, the matrix of partial effects of small location shifts in the distribution of $\bs X$ is given by (see Corollary 1 in \citealt{firpo2009unconditional}):}
 \begin{equation}\label{eq:upe}
\boldsymbol{\alpha} (\nu) = \int \frac{d \mathbb{E} [RIF(\bs Y; \nu) \mid {\bs X} = \bs {x}]}{d \bs x} dF_{\bs X} (\bs x),
\end{equation}
where $d \mathbb{E} [RIF(\bs Y; \nu) \mid {\bs X} = \bs {x}] / d \bs x$ is understood to indicate the Jacobian matrix of all its first order partial derivatives with respect to $[ \bs x_j]_{j=1}^k$. \cite{firpo2009unconditional} call the quantity $\boldsymbol{\alpha} (\nu)$ in \eqref{eq:upe} as the Unconditional Partial Effect (UPE). \red{By analogy with standard conditional regression coefficients, $\boldsymbol{\alpha} (\nu)$ corresponds to the effect of a small increase in the location of the distribution of the explanatory variables 
 on the functional $\nu(F_{\bs Y})$, holding everything else constant.} \redd{It is worth noting that our approach requires $F_{\bs Y \mid \bs X}$ to be \qmo structural\qmcsp in the sense of \qmo invariant to a class of modifications\qmcsp (\citealt{heckman2007econometric}, p. 4848 and Section 4.8) and rules out the presence of potential sources of endogeneity like selection bias in inference.} 
\\
 In the case of a dummy variable $X \in \{0,1\}$, the UPE represents the effect of a small increase in the probability that $X = 1$, namely:
\begin{equation}
\boldsymbol{\alpha} (\nu) = \mathbb{E} [RIF(\bs Y; \nu) \mid X = 1] - \mathbb{E} [RIF(\bs Y; \nu) \mid X = 0].
\end{equation}

\redd{In addition, as discussed by \cite{firpo2009unconditional} and \cite{firpo2018decomposing}, the approach based on the RIF can also be used to assess the effect of more general changes in the distribution of covariates on the functional $\nu (F_{\bs Y})$. For example, if the goal of the analysis is to assess how such modifications affect the dependence of $\bs Y$, $\nu$ can be a measure of multivariate scatter or a well-known correlation coefficient, provided that the corresponding influence function can be derived.}

 In the following section we exploit these properties for the analysis of unconditional quantile, M-quantile and expectile regressions associated to multivariate response variables.

\section{Methodology}\label{sec:ummq}
In this section, we generalize the univariate approach of \cite{firpo2009unconditional} 
 by developing a unifying regression method to model unconditional quantiles, M-quantiles and expectiles of vector-valued responses \red{for exploring the effects of covariates} using the RIF approach illustrated in Section \ref{sec:pre}. Firstly, we introduce the Huber's multidimensional M-function for estimating multivariate quantiles, M-quantiles and expectiles by varying the value of the tuning constant in an appropriate way. 
 Then, in order to assess the effect of the covariates at different parts of the unconditional distribution of the dependent variable, we introduce the Unconditional Quantile, M-Quantile and Expectile Partial Effect (UQPE, UMQPE, UEPE) 
 and establish the asymptotic properties of the corresponding estimators.
 The Huber's function, in the univariate case, has been used by \cite{breckling1988m} to define the M-quantile, extending the concept of M-estimation of \cite{huber1964} to the quantile framework. Huber M-quantiles are a generalized form of M-estimators that include in a single modeling approach, quantiles and expectiles through a tuning constant to adjust the robustness of the estimator in the presence of outliers. In higher dimensions, extending these univariate notions to multivariate data is not a trivial task since there does not exist a natural ordering in $p$ dimensions, $p > 1$. Already in their proposal, \cite{breckling1988m} considered the multivariate extension of Huber's function to estimate M-quantiles by
 considering 
 a directional unit norm vector 
 to set up a suitable ordering procedure for multidimensional data. Further, \cite{kokic2002new} generalized their approach by introducing a weighted estimating equation based on the multidimensional Huber's influence function that encompasses multivariate quantiles, M-quantiles and expectiles, depending on the value of the related tuning constant. \red{Their method aimed to provide a robust and simple to implement technique of summarising and conveying valuable information about multidimensional data.} 
 Here and in what follows, in order to present a unified unconditional regression approach to model multivariate quantiles, M-quantiles and expectiles, we adopt the approach of \cite{kokic2002new} based on the multidimensional Huber's function. 

Formally, the multidimensional Huber's influence function in \cite{hampel2011robust} is defined as:
\begin{equation}\label{eq:mhub}
     \Psi(\bs{r}) = \left\{ \begin{array}{ll}
      \frac{\bs{r}}{c}, & \Vert \bs{r} \Vert < c \\
      \frac{\bs{r}}{\Vert \bs{r} \Vert}, & \Vert \bs{r} \Vert \geq c \\
\end{array} \right., \qquad \bs{r} \in \mathbb{R}^p, \bs{r} \neq \bs 0
\end{equation} 
\red{with additionally $\Psi(\bs 0) = \bs 0$ and} where $c \geq 0$ is the tuning constant that can be adjusted to trade robustness for efficiency, with increasing robustness when it is chosen to be close to 0 and increasing efficiency when it is chosen to be large. 
 To show how one can use
 the $\Psi(\bs{r})$ function in \eqref{eq:mhub} to estimate multivariate quantiles, M-quantiles and expectiles, we introduce the following additional notation. Let $\mathcal{Y} = \mathbb{R}^p$, consider a continuous $p$-dimensional random variable ${\bs Y}$ 
 and let $\bs u$ denote a unit norm direction vector ranging over the $p$-dimensional unit sphere $\mathcal{S}^{p-1} = \{ \bs{z} \in \mathbb{R}^p : || \bs{z} || = 1 \}$, where $||\cdot||$ denotes the Euclidean norm. Following \cite{kokic2002new}, for a general value of $c$, we obtain the $\tau$-th multivariate M-quantile of $\bs Y$ in the direction of $\bs u$, $\bold \theta_{\tau, \bs{u}}$, with $\tau \in (0, \frac{1}{2}]$, by satisfying the equation:
\begin{equation}\label{eq:min}
\int \eta_{\delta} (\varphi) \Psi(\bs{y} - \bold \theta_{\tau, \bs{u}}) d F_{\bs Y} (\bs y) = {\bs 0},
\end{equation}
where 
\[ \eta_{\delta} (\varphi) =  \left\{
\begin{array}{ll}
      (1 - \cos \varphi)^\delta \zeta + 2\tau, & \varphi \in (-\frac{\pi}{2}, \frac{\pi}{2}) \\
      -(1 - \cos \varphi)^\delta \zeta + 2(1-\tau), & \varphi \in [-\pi, -\frac{\pi}{2}] \cup [\frac{\pi}{2}, \pi], \\
\end{array} 
\right. \]
is a weighting function with $\zeta = 1 - 2\tau$, $\delta > 0$ and $\varphi$ being the angle between $\bs{Y} - \bold \theta$ and $\bs{u}$, so $\cos \varphi = \frac{(\bs{Y} - \bold \theta)' \bs{u}}{\Vert \bs{Y} - \bold \theta \Vert}$. \red{Evidently, one can also consider non-normalized directions by explicitly writing $\cos \varphi = \frac{(\bs{Y} - \bold \theta)' \bs{u}}{\Vert \bs{Y} - \bold \theta \Vert \Vert \bs{u} \Vert}$.}

 The function $\eta_{\delta} (\varphi)$ gives asymmetric weights to the residual $\bs Y - \bold \theta$ depending both on its length and the angle it forms with $\bs u$. 
 Most importantly, the tuning constant $c$ 
 determines where the weighted scheme based on $\eta_{\delta} (\varphi)$ defines multivariate quantiles when $c = 0$ and yields multivariate expectiles as $c \rightarrow \infty$, which could be particularly fruitful when the use of outlier-robust estimation methods is not justified but there is still interest in modeling the entire distribution of $\bs Y$ (\citealt{tzavidis2010m}). \red{Hence, multivariate expectiles inherit the efficiency properties of standard univariate expectiles and, at the same time, consider the dependence structure between the components of the variable analyzed}. Computationally, an estimate of $\bold \theta_{\tau, \bs{u}}$ in \eqref{eq:min} can be efficiently obtained by using Iteratively Reweighted Least Squares (IRLS, \citealt{breckling2001note}). 
 Clearly, the multivariate M-quantile in \eqref{eq:min} includes the traditional notion of univariate M-quantile. \redd{Indeed, if $p=1$, $u = 1$ and $\delta = 1$, then \eqref{eq:min} reduces to the estimating equation of the univariate M-quantile, $\theta_\tau$, because $\cos \varphi = \textnormal{sgn}(Y - \theta_\tau)$, which implies that $\eta_{\delta} (\varphi) = 1 - \zeta \textnormal{sgn}(Y - \theta_\tau)$.}
 \red{Throughout the rest of the paper we set $\delta = 1$, but other values of $\delta$ are possible (see \citealt{kokic2002new} and the Supplementary Materials for this article).}
  
\red{In comparison with other settings in the literature, the considered approach remedies the shortcoming of the \cite{chaudhuri1996geometric} and \cite{herrmann2018multivariate} definitions of geometric quantile and expectile which have recently been criticized because they can be situated outside the support of $\bs Y$ for extreme levels of $\tau$ (see \citealt{girard2017intriguing, konen2021multivariate}). Moreover, \cite{breckling1988m} M-quantiles, which can be shown to coincide with the definition in \cite{chaudhuri1996geometric} as a particular case, are also subject to the same problem. The definition in \eqref{eq:min}, on the contrary, lead to multivariate M-quantiles always situated within the convex hull of the sample data (for a comparative analysis between the considered definition of multivariate (M-)quantile and the geometric quantile, see the Supplementary Materials).}
 
 \red{A particular issue in this context may be the choice of the direction $\bs u$, which is often selected on the basis of the empirical problem at hand to produce meaningful results (see \citealt{paindaveine2011directional, kong2012quantile} and \citealt{merlo2022marginal}). For example, \cite{fraiman2012quantiles} and \cite{torres2017directional} consider a set of directions using the principal components obtained from a PCA to build a useful tool for exploratory data analysis and visualize important features of multidimensional data. 
  In classification analysis, \cite{farcomeni2020directional} focus on a single optimal direction 
 that minimizes the misclassification error meanwhile, \cite{geraci2019quantile} define the ``allometric direction'' for risk classification of abnormal multivariate anthropometric measurements. 
 }



 
 When the directional approach is adopted, considering theoretically all directions in $\mathcal{S}^{p-1}$ simultaneously yields multivariate M-quantiles centrality regions, which allow us to provide a visual description of the location, spread, shape and dependence between the responses distribution. These quantities are of crucial interest as they are able to adapt to the underlying shape of the distribution of $\bs Y$ without being constrained to particular shapes, such as convex bodies or ellipses (\citealt{breckling2001note}). \red{Specifically, for a given level $\tau \in (0, \frac{1}{2}]$, by moving the direction $\bs u$ around the whole $\mathcal{S}^{p-1}$, the resulting set of corresponding multivariate M-quantiles generates the $\tau$-th M-quantile region embedded within the $p$-dimensional Euclidean space, $R_\tau \subset \mathbb{R}^{p}$, defined as the collection whose vertices are:}
\begin{equation}\label{eq:mqcon}
R_\tau = \{ \bold \theta_{\tau, \bs{u}} \, \big| \, {\bs u} \in \mathcal{S}^{p-1} \}.
\end{equation}
The region in \eqref{eq:mqcon} is a closed surface and the corresponding M-quantile contour of order $\tau$ is defined as the boundary $\partial R_\tau$ of $R_\tau$. \red{The position, curvature, spread and orientation of such objects all reflect important characteristics of the data that are relevant to researchers for exploratory data analysis.} 
 \red{M-quantile curves can thus be used for exploratory data analysis as well as a statistical diagnostic tool to evaluate the results to different values of the tuning constant.} For fixed $\tau$, 
 when $c=0$ it defines quantile contours and it generates expectile contours when $c \rightarrow \infty$. Meanwhile, for any $c \geq 0$, the contours are nested as $\tau$ increases. As $\tau \rightarrow 0$, instead, the $\tau$-th M-quantile contour approaches the convex hull of the sample data providing valuable information about the extent of extremality of points (see \citealt{serfling2002quantile} and \citealt{kokic2002new}).

To build our model, it follows from \eqref{eq:min} and \cite{hampel2011robust} that the IF for the unconditional multivariate M-quantile $\bold \theta_{\tau, \bs{u}}$ is defined as:
\begin{equation}\label{eq:mIF}
IF({\bs y}; \bold \theta_{\tau, \bs{u}}) = {\bs M} {(\bold \theta_{\tau, \bs{u}})}^{-1} \eta_{\delta} (\varphi) \Psi ({\bs y} - \bold \theta_{\tau, \bs{u}})
\end{equation}
with ${\bs M (\bold \theta_{\tau, \bs{u}})}$ being the $p \times p$ matrix given by:
\begin{align}\label{eq:M}
{\bs M} (\bold \theta_{\tau, \bs{u}}) = - \int \nabla_{\bold \theta_{\tau, \bs{u}}} \big( \eta_{\delta} (\varphi)  \Psi(\bs{y} - \bold \theta) \big) dF_{\bs Y} (\bs{y}),
\end{align}
where $\nabla_{\bold \theta_{\tau, \bs{u}}} \big( \cdot \big)$ is the $p \times p$ matrix of first order derivatives of $ \eta_{\delta} (\varphi)  \Psi(\bs{y} - \bold \theta)$ in \eqref{eq:min} with respect to $\bold \theta$ evaluated at $\bold \theta_{\tau, \bs{u}}$. Then, following the idea in \eqref{eq:RIF0}, the RIF is obtained from \eqref{eq:mIF} by adding back the multivariate M-quantile $\bold \theta_{\tau, \bs{u}}$:
\begin{equation}\label{eq:mRIF}
RIF({\bs y}; \bold \theta_{\tau, \bs{u}}) = \bold \theta_{\tau, \bs{u}} + IF({\bs y}; \bold \theta_{\tau, \bs{u}}).
\end{equation}
Two remarks are worth noticing. Firstly, \eqref{eq:mRIF} generalizes the RIF of the multivariate quantile when $c = 0$, where the matrix of first order derivatives of $\Psi (\bs{r})$ is equal to:
\begin{equation}
\frac{d \Psi(\bs{r})}{d \bs r} = \frac{1}{\Vert \bs{r} \Vert} \Big\{ \mathbf{I}_p - \frac{\bs{r} \bs{r}'}{\Vert \bs{r} \Vert^{2}} \Big\},
\end{equation}
with $\mathbf{I}_p$ being the identity matrix of dimension $p$, and it coincides with the RIF of the multivariate expectile when $c \rightarrow \infty$, which implies $\frac{d \Psi(\bs{r})}{d \bs r} \propto \mathbf{I}_p$. 
 Secondly, when $p=1$ and $u=1$, \eqref{eq:mRIF} reduces to the univariate RIF of standard M-quantiles which, in turn, includes the RIF of the quantile in \cite{firpo2009unconditional} and the RIF of the expectile for $c$ arbitrarily large. 
 Further, using this approach we are able to investigate the correlation structure of multivariate responses at different values of $\tau$. More in detail, to study the association between multiple outcomes we analyze the covariance matrix of the RIF in \eqref{eq:mRIF} which, by simple calculations, can be written as:
\begin{equation}\label{eq:cor}
\bs \Delta (\bold \theta_{\tau, \bs{u}}) = \mathbb{E} [ IF({\bs Y}; \bold \theta_{\tau, \bs{u}}) IF({\bs Y}; \bold \theta_{\tau, \bs{u}})'].
\end{equation}
Given $\bs u$, $\tau$ and $c$, the off-diagonal elements of $\bs \Delta (\bold \theta_{\tau, \bs{u}})$ provide a measure of tail correlation between the components of $\bs Y$.
 
In a regression framework where covariates $\bs X$ are available, from \eqref{eq:mRIF} we define the unified unconditional regression model as follows:
\red{
\begin{equation}\label{eq:RIFmm}
\mathbb{E} [RIF(\bs Y; \bold \theta_{\tau, \bs{u}}) \mid {\bs X} = \bs {x}] = m_{\bold \theta_{\tau, \bs{u}}} (\bs x),
\end{equation}
where $m(\cdot)$ is an unknown function of explanatory variables $\bs X$ to be estimated.
}
Our objective is to identify how changes in the distribution of $\bs X$ affect the multivariate quantile, M-quantile and expectile of the unconditional distribution of $\bs Y$. Following \eqref{eq:upe}, for a given level $\tau$, direction $\bs u$, and constant $c \geq 0$, the Unconditional M-Quantile Partial Effect (UMQPE), $\boldsymbol{\alpha}_{\tau, \bs{u}}$, is formally defined as:
\begin{equation}\label{eq:mpart}
\boldsymbol{\alpha}_{\tau, \bs{u}} = \int \frac{d \mathbb{E} [RIF({\bs Y}; \bold \theta_{\tau, \bs{u}}) \mid {\bs X = \bs x}]}{d \bs x} dF_{\bs X} (\bs x).
\end{equation}
It is worth noting that the proposed approach has several appealing properties. Firstly, the UMQPE in \eqref{eq:mpart} is easy to compute as it does not depend on the density of $\bs Y$ which would entail the use of nonparametric density estimation procedures (\citealt{kokic2002new}). Secondly, this methodology allows us to directly control the robustness to outliers and estimation efficiency by means of the tuning constant $c$, 
 \red{i.e.,} when $c = 0$ we have the UQPE and when $c \rightarrow \infty$ we have the UEPE. \red{In practice, the UMQPE indicates the effect of increasing the years of schooling or income, say, across the consumption and wealth distributions, as we will discuss in the next section.}

Before concluding this section, it is relevant to compare the proposed unconditional regression model with the standard conditional regression approach. 
 Suppose that $\bs Y = h(\bs X) + \bs \epsilon$, where $h: \mathcal{X} \rightarrow \mathbb{R}^p$ is an unknown function with bounded first partial derivatives and $\bs \epsilon$ is a $p$-dimensional random error independent of $\bs X$. For a given $\tau$, $\bs u$ and $c \geq 0$, we denote the $\tau$-th multivariate M-quantile of $\bs Y$ conditional on $\bs X = \bs x$, $\bold \theta_{\tau, \bs{u}} (\bs x) = h(\bs x)$, as the solution of the following estimating equation:
\begin{equation}\label{eq:minx}
\int \eta_{\delta} (\varphi) \Psi(\bs{y} - \bold \theta_{\tau, \bs{u}} (\bs x)) d F_{\bs Y \mid \bs X} (\bs y \mid \bs x) = {\bs 0}.
\end{equation}
Consequently, the effect of a small change in $\bs X$ on the conditional M-quantile of $\bs Y$, $\bold \theta_{\tau, \bs{u}} (\bs x)$, which we denote as Conditional M-Quantile Partial Effect (CMQPE), is given by:
\begin{equation}\label{eq:cmmqpe}
\boldsymbol{\alpha}_{\tau, \bs{u}} (\bs x) = \frac{d h(\bs x)}{d \bs x}.
\end{equation}
In order to clarify the interpretation of \eqref{eq:mpart}, following \cite{firpo2009unconditional}, we provide a useful representation of the UMQPE in terms of the conditional distribution of $\bs Y$ given $\bs X$ and show how it is related to the CMQPE in \eqref{eq:cmmqpe}.
 Let $\bs W_{\tau, \bs{u}} : \mathcal{X} \rightarrow \mathbb{R}^{p \times p}$ define the weighting matrix function:
\begin{equation}\label{eq:w}
\bs W_{\tau, \bs{u}} (\bs x) = {\bs M} {(\bold \theta_{\tau, \bs{u}})}^{-1} \mathbb{E}[\nabla \big( \eta_{\delta} (\varphi) \Psi (\bs Y - \bold \theta_{\tau, \bs{u}}) \big) \mid \bs X = \bs x]
\end{equation}
and let $s_{\tau, \bs u}$ be \red{an auxiliary function $s_{\tau, \bs u} : \mathcal{X} \rightarrow (0,1)$ required to establish the link between the UMQPE and the CMQPE. The mapping $s_{\tau, \bs u}$ can be thought as a ``matching'' function indicating where the unconditional multivariate M-quantile $\bold \theta_{\tau, \bs{u}}$ falls in the conditional distribution of $\bs Y$ given covariates, i.e.:}
\begin{equation}\label{eq:s}
s_{\tau, \bs u} (\bs x) = \{ \tilde \tau : \bold \theta_{\tilde \tau, \bs{u}} (\bs x) = \bold \theta_{\tau, \bs{u}} \}.
\end{equation}
\redd{In this work, we require that, under the condition that $F_{\bs Y \mid \bs X} (\bs y \mid \bs x)$ is continuous and strictly monotonic (\citealt{breckling1988m}), $s_{\tau, \bs u} (\bs x)$ is a singleton, that is, there is only one $\tilde \tau$ that satisfies the equality in \eqref{eq:s}. Generally, three situations may occur: $s_{\tau, \bs u} (\bs x)$ is unique, it is interval-valued or it is empty for some $\bs x$ (\citealt{alejo2021conditional}). From a practical point of view, for fixed $\bs x$, if $s_{\tau, \bs u} (\bs x)$ is interval-valued, then a possible solution would be to take the average of all $\tilde \tau$'s inside that particular interval. If, on the other hand, $s_{\tau, \bs u} (\bs x)$ is empty, one could impute $\tilde \tau$ by taking the average of the nearest neighbors to the covariate value $\bs x$. Refer to the Supplementary Materials for a discussion and an estimator of the matching function.}

 The next Theorem establishes a link between the UMQPE and the CMQPE. 
\begin{theorem}\label{thm:wavem}
Assume that $\bs Y = h(\bs X) + \bs \epsilon$ where $h(\cdot)$ is an unknown function with bounded first partial derivatives and $\bs \epsilon$ is an error term independent of $\bs X$. 
 For a given $\tau \in (0, \frac{1}{2}]$, $\bs{u} \in \mathcal{S}^{p-1}$ and $c \geq 0$, the UMQPE $\boldsymbol{\alpha}_{\tau, \bs{u}}$ can be written as:
\begin{equation}\label{eq:wavem2}
\boldsymbol{\alpha}_{\tau, \bs{u}} = \mathbb{E}[\bs W_{\tau, \bs{u}} (\bs X) \, \boldsymbol{\alpha}_{s_{\tau, \bs u} (\bs X), \bs u} (\bs X)],
\end{equation}
where the expectation is taken over the distribution of $\bs X$.
\end{theorem}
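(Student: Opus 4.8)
The plan is to differentiate the conditional mean of the RIF in \eqref{eq:mRIF} with respect to $\bs x$, to recognize the weighting matrix $\bs W_{\tau,\bs{u}}(\bs x)$ of \eqref{eq:w} inside the resulting Jacobian, and to identify the remaining factor with the CMQPE \eqref{eq:cmmqpe} evaluated at the matched level $s_{\tau,\bs{u}}(\bs x)$ of \eqref{eq:s}. First I would substitute \eqref{eq:mIF}--\eqref{eq:mRIF} into the conditional expectation. Since the unconditional M-quantile $\bold\theta_{\tau,\bs{u}}$ and the matrix $\bs M(\bold\theta_{\tau,\bs{u}})$ in \eqref{eq:M} are determined by $F_{\bs Y}$ alone and hence do not depend on $\bs x$,
\[
\mathbb{E}[RIF(\bs Y;\bold\theta_{\tau,\bs{u}})\mid\bs X=\bs x]=\bold\theta_{\tau,\bs{u}}+\bs M(\bold\theta_{\tau,\bs{u}})^{-1}\,\mathbb{E}[\eta_\delta(\varphi)\Psi(\bs Y-\bold\theta_{\tau,\bs{u}})\mid\bs X=\bs x],
\]
so differentiating in $\bs x$ annihilates the constant term and leaves $\bs M(\bold\theta_{\tau,\bs{u}})^{-1}$ multiplying the Jacobian of the map $\bs x\mapsto\mathbb{E}[\eta_\delta(\varphi)\Psi(\bs Y-\bold\theta_{\tau,\bs{u}})\mid\bs X=\bs x]$.

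Next I would exploit the location-shift structure $\bs Y=h(\bs X)+\bs\epsilon$ with $\bs\epsilon$ independent of $\bs X$: conditionally on $\bs X=\bs x$ the integrand $\eta_\delta(\varphi)\Psi(\bs Y-\bold\theta_{\tau,\bs{u}})$ depends on $(\bs x,\bs\epsilon)$ only through $\bs r=h(\bs x)+\bs\epsilon-\bold\theta_{\tau,\bs{u}}$. Writing $\Phi(\bs r)=\eta_\delta(\varphi(\bs r))\Psi(\bs r)$ and applying the chain rule under the integral sign gives
\[
\frac{d}{d\bs x}\,\mathbb{E}[\Phi(h(\bs x)+\bs\epsilon-\bold\theta_{\tau,\bs{u}})]=\mathbb{E}\Big[\tfrac{d\Phi}{d\bs r}(\bs Y-\bold\theta_{\tau,\bs{u}})\,\Big|\,\bs X=\bs x\Big]\,\frac{dh(\bs x)}{d\bs x}.
\]
Reading the operator $\nabla$ in \eqref{eq:w} as differentiation with respect to the argument of $\eta_\delta(\varphi)\Psi(\cdot)$, the first factor premultiplied by $\bs M(\bold\theta_{\tau,\bs{u}})^{-1}$ is exactly $\bs W_{\tau,\bs{u}}(\bs x)$, while the second factor $dh(\bs x)/d\bs x$ is the CMQPE. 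Because the shift is pure location, the conditional M-quantile equals $h(\bs x)$ up to an $\bs x$-free intercept, so by \eqref{eq:cmmqpe} its derivative does not depend on the level and in particular $dh(\bs x)/d\bs x=\boldsymbol{\alpha}_{s_{\tau,\bs{u}}(\bs x),\bs{u}}(\bs x)$. Substituting into \eqref{eq:mpart} and integrating against $F_{\bs X}$ then yields \eqref{eq:wavem2}; here the matching function $s_{\tau,\bs{u}}$ serves only to cast the identity in the general conditional form of the statement, even though its value is immaterial under this model since the CMQPE is level-free.

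The main obstacle is justifying the interchange of differentiation and expectation in the chain-rule step, because $\Psi$ in \eqref{eq:mhub} has a kink on the sphere $\Vert\bs r\Vert=c$ and $\eta_\delta$ is only piecewise smooth across $\varphi=\pm\tfrac{\pi}{2}$, so $\Phi$ fails to be differentiable on these sets. I would dispose of this by noting that $\bs\epsilon$ is absolutely continuous, whence $h(\bs x)+\bs\epsilon-\bold\theta_{\tau,\bs{u}}$ hits the non-smooth sets with probability zero for every $\bs x$; the map $\bs x\mapsto\mathbb{E}[\Phi(h(\bs x)+\bs\epsilon-\bold\theta_{\tau,\bs{u}})]$ is therefore differentiable and the derivative passes inside the expectation. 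A dominated-convergence argument, using the boundedness of $\Psi$ and of $d\Phi/d\bs r$ away from the origin, the bounded first partial derivatives of $h$, and the finiteness and invertibility of $\bs M(\bold\theta_{\tau,\bs{u}})$, makes this rigorous; this is the same regularity that underlies the Bahadur representation invoked later for the estimators.
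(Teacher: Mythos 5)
Your proposal is correct and follows essentially the same route as the paper's proof: substitute \eqref{eq:mIF}--\eqref{eq:mRIF} into $\mathbb{E}[RIF(\bs Y;\bold\theta_{\tau,\bs u})\mid\bs X=\bs x]$, differentiate in $\bs x$ using the location-shift structure $\bs Y=h(\bs X)+\bs\epsilon$ so that the chain rule produces $\mathbb{E}[\nabla(\eta_\delta(\varphi)\Psi(\bs Y-\bold\theta_{\tau,\bs u}))\mid\bs X=\bs x]\,dh(\bs x)/d\bs x$, identify the two factors with $\bs W_{\tau,\bs u}(\bs x)$ and the (level-free) CMQPE $\boldsymbol\alpha_{s_{\tau,\bs u}(\bs x),\bs u}(\bs x)$, and integrate against $F_{\bs X}$. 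Your explicit handling of the two delicate points --- reading $\nabla$ in \eqref{eq:w} as differentiation in the argument $\bs r$ (which also makes $\mathbb{E}[\bs W_{\tau,\bs u}(\bs X)]=\mathbf{I}_p$, consistent with a weighting interpretation) and justifying differentiation under the expectation despite the kinks of $\Psi$ and $\eta_\delta$ via absolute continuity and domination --- matches the regularity the paper relies on.
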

\begin{proof}
See Proof of Theorem \ref{thm:wavem} in the Supplementary Materials.
\end{proof}
From Theorem \ref{thm:wavem} there follow several interesting considerations. Firstly, it formally shows that, unlike conditional means which average up to the unconditional mean thanks to the law of iterated expectations, conditional multivariate M-quantiles do not average up to their unconditional counterparts. On the contrary, the UMQPE is equal to a weighted average, over the distribution of the covariates, of the CMQPE at the $s_{\tau, \bs{u}} (\bs X)$-th conditional M-quantile level corresponding to the $\tau$-th unconditional M-quantile of $\bs Y$ in the direction $\bs u$. Secondly, it generalizes the result in \cite{firpo2009unconditional} to the multivariate setting which also holds in the univariate case when $p=1$ and $u = 1$. 
 Furthermore, Theorem \ref{thm:wavem} is useful for interpreting the parameters of the proposed regression method. For instance, in a linear model where $h(\bs X) = \boldsymbol{\beta}' \bs X$, the UMQPE and CMQPE are both equal to the matrix of regression coefficients $\boldsymbol{\beta}$ for any choice of $\tau$ and $\bs u$. More generally, the UMQPE and CMQPE will be different depending on the structural form of $h(\cdot)$ and the distribution of $\bs X$ as described in Theorem \ref{thm:wavem}. 
 \redd{Lastly, in the presence of heteroskedastic errors, one can consider the linear heteroskedastic model $\bs Y = \boldsymbol{\beta}' \bs X +\mbox{diag}(\boldsymbol{\gamma}' \bs X) \epsilon$ which is often used in economics, with $\boldsymbol{\gamma}$ being a matrix of coefficients. In this case, to exploit Theorem \ref{thm:wavem} a practical solution would be to adopt a flexible form of $h(\cdot)$ and estimate the conditional expectation of the RIF using nonparametric regression approaches. Alternatively, a possible way to alleviate heteroskedasticity is to apply a transformation to the original response variable.}

\subsection{Estimation}\label{sub:est}
In this section, we discuss the estimation of the UQPE, UMQPE and UEPE using the RIF regression approach. \redd{Following \cite{firpo2009unconditional}, we suggest two methods for modeling the conditional expectation of the RIF. For a given $\tau$, direction $\bs u$ and $c \geq 0$, the first one assumes that} the RIF in \eqref{eq:RIFmm} is linear in the covariates, $m_{\bold \theta_{\tau, \bs{u}}} (\bs X) = \bold \beta' (\bold \theta_{\tau, \bs{u}}) \bs X$ and estimate $\boldsymbol{\alpha}_{\tau, \bs{u}}$ in \eqref{eq:mpart} via an OLS regression of the $RIF({\bs Y}; \bold \theta_{\tau, \bs{u}})$ as a dependent variable onto the covariates $\bs X$ by using a two-step procedure. Specifically, an estimate $\widehat{\bold \theta}_{\tau, \bs{u}}$ of $\bold \theta_{\tau, \bs{u}}$ is obtained by solving \eqref{eq:min} via IRLS, substitute $\widehat{\bold \theta}_{\tau, \bs{u}}$ in \eqref{eq:mRIF} and then estimate $\boldsymbol{\alpha}_{\tau, \bs{u}}$ by regressing the $RIF({\bs Y}; \widehat{\bold \theta}_{\tau, \bs{u}})$ on ${\bs X}$. Let $(\bs Y_i, \bs X_i)$, $i = 1,\dots,n$, denote a random sample of size $n$, the estimator of the UMQPE in \eqref{eq:mpart}, $\widehat{\boldsymbol{\alpha}}_{\tau, \bs{u}}$, is defined as follows:
\begin{equation}\label{eq:ummqpee}
\widehat{\boldsymbol{\alpha}}_{\tau, \bs{u}} = \widehat{\Omega}^{-1}_{\bs X} \frac{1}{n} \sum_{i=1}^n \{ \bs X_i RIF'(\bs Y_i; \widehat{\bold \theta}_{\tau, \bs{u}}) \}.
\end{equation}
where $\widehat{\Omega}_{\bs X} = \frac{1}{n} \sum_{i=1}^n \bs X_i \bs X'_i$. 
 
 \redd{The second method estimates non-parametrically $\mathbb{E} [RIF(\bs Y; \bold \theta_{\tau, \bs{u}}) \mid {\bs X}]$ using regression B-splines to account for nonlinear effects of $\bs X$ on the RIF. Once we have regressed $RIF({\bs Y}; \widehat{\bold \theta}_{\tau, \bs{u}})$ on the basis functions of the original covariates, 
 as the object of interest is the average of $\frac{d \mathbb{E} [RIF({\bs Y}; \bold \theta_{\tau, \bs{u}}) \mid {\bs X = \bs x}]}{d \bs x}$, to obtain the UMQPE we simply take derivative of B-spline basis functions and average them with respect to $\bs X$. Note also that other nonparametric approaches can be adopted such as power series estimators (\citealt{newey1994asymptotic}) or orthogonal polynomials.}
  
 \redd{Finally,} the estimators of the UQPE and UEPE related to \eqref{eq:mpart} 
 can be obtained following the same procedure by setting $c = 0$ and $c$ large enough such that $|| \bs Y_i - \bold \theta_{\tau, \bs{u}}  || < c$, $\forall i = 1,\dots,n$, respectively. 
 \red{If the researcher is interested in modeling multivariate quantiles or expectiles, one can simply set $c = 0$ or $c$ to a relatively large value. In all other cases, the UMQPE estimator in \eqref{eq:ummqpee} requires choosing a reasonable value for the tuning constant, based on the data structure and the aim of the analysis.} 
  
  The choice of an appropriate value for $c$ is not straightforward. Ideally, it should be data-driven and account for possible outliers in the data. In the literature on univariate M-estimation, $c$ can be either fixed a-priori or defined by the data analyst to achieve a specified asymptotic efficiency under normality (\citealt{huber2009robust}), maximize the asymptotic efficiency (\citealt{wang2007robust}) or it can be estimated in a likelihood framework 
 as illustrated by \cite{bianchi2018estimation}. 
 In our multivariate context, \redd{we propose to select the tuning constant} via \red{$K$}-fold cross-validation which allows us to consider $c$ as a data-driven parameter. In particular, for fixed $\tau$ and $\bs u$, we construct a uniform grid of values from $c_{min} = 0.1$ to $c_{max} = \underset{i=1,\dots,n}{\textnormal{max}} \mid \mid \bs Y_i \mid \mid$. Then, for each value of $c \in [c_{min},\dots,c_{max}]$, we fit the proposed model and determine \redd{the optimal value, denoted with $c^\star$, in the sense that it} minimizes the estimated prediction error across the $K$ folds, that is: 
\begin{equation}\label{eq:cvmse}
\widehat{\mbox{CV}}_K (c) = \frac{1}{n} \sum_{k=1}^K \sum_{i \in I_k} (\bs Y_i - \widehat{\bold \theta}^{(k)}_{\tau, \bs u} (c))' (\bs Y_i - \widehat{\bold \theta}^{(k)}_{\tau, \bs u} (c)),
\end{equation}
where $I_1, \dots, I_K$ is a random partition of the $n$ observations into $K$ folds and $\widehat{\bold \theta}^{(k)}_{\tau, \bs u} (c)$ is the estimate of ${\bold \theta}_{\tau, \bs u}$ obtained using the entire sample except data points in the $k$-th fold for a given value of $c$. Finally, $c^\star$ is estimated by:
\begin{equation}\label{eq:cstar}
c^\star = \underset{c \, \in \, [c_{min},\dots,c_{max}]}{\textnormal{arg min}} \widehat{\mbox{CV}}_K (c).
\end{equation}

\subsection{Asymptotic properties}\label{sub:asym}
This section presents the asymptotic properties of the estimator $\widehat{\boldsymbol{\alpha}}_{\tau, \bs{u}}$ in \eqref{eq:ummqpee} \redd{where the $\mathbb{E}[RIF(\bs Y; \nu) \mid {\bs X} = \bs {x}]$ is modeled as a linear function of $\bs X$.} Specifically, we derive the Bahadur-type \citep{bahadur1966note} representation, consistency and asymptotic normality for fixed $\tau$, direction $\bs u$ and $c$. To prove the following results, we follow \cite{firpo2009unconditional} where they consider the IF and not its recentered version. Either using the IF or the RIF, all regression coefficients are the same, the only exception being the intercept. 

 Consider the following assumptions:
\begin{enumerate}[label=(\roman*)]
\item[(A1)] The distribution of the random vector $\bs Y$ is absolutely continuous with respect to the Lebesgue measure on $\mathbb{R}^p$, with a density bounded on every compact subset of $\mathbb{R}^p$. 
\item[(A2)] The observations $(\bs Y_i, \bs X_i), i = 1,\dots, n$ are an i.i.d. sample from $(\bs Y, \bs X)$.
\item[(A3)] \red{$\mathbb{E}[\mid \mid \eta_{\delta} (\varphi) \Psi(\bs{Y} - \bold \theta) \mid \mid] < \infty, \forall \bold \theta \in \mathbb{R}^p$.}
\item[(A4)] \red{$\mathbb{E}[\mid \mid \eta_{\delta} (\varphi) \Psi(\bs{Y} - \tilde{\bold \theta}) \mid \mid^2] < \infty$ for each $\tilde{\bold \theta}$ in a neighborhood of $\bold \theta_{\tau, \bs u}$.}
\item[(A5)] The $p \times p$ matrix ${\bs M (\bold \theta_{\tau, \bs{u}})}$ in \eqref{eq:M} is positive definite.
\item[(A6)] $\widehat{\Omega}_{\bs X}$ is nonsingular almost surely for $n$ sufficiently large and converges to $\Omega_{\bs X} = \mathbb{E}[ \bs X \bs X' ]$.
\end{enumerate}
It is worth noticing that assumptions A1-A6 are quite mild and are standard in robust estimation theory. For instance, 
 assumptions A1-A5 are needed for the Bahadur (\citealt{bahadur1966note}) representation and ensure the invertibility of ${\bs M (\bold \theta_{\tau, \bs{u}})}$.

In order to present the asymptotic properties $\widehat{\boldsymbol{\alpha}}_{\tau, \bs{u}}$, we first need to establish the Bahadur-type representation for $\widehat{\bold \theta}_{\tau, \bs{u}}$ and its limiting distribution. 


\begin{theorem}\label{thm:bahac}
Let assumptions A1-A5 hold. Then, for any $\tau \in (0, \frac{1}{2}]$ and $\bs{u} \in \mathcal{S}^{p-1}$, the following asymptotic linear representation holds:
\begin{equation}
\sqrt{n} (\widehat{\bold \theta}_{\tau, \bs{u}} - \bold \theta_{\tau, \bs{u}}) = {\bs M} {(\bold \theta_{\tau, \bs{u}})}^{-1} \frac{1}{\sqrt{n}} \sum_{i=1}^n \eta_{\delta} (\varphi_i) \Psi(\bs{Y}_i - \bold \theta_{\tau, \bs{u}}) + \omicron_p (1)
\end{equation}
and 
\begin{equation}
\sqrt{n} (\widehat{\bold \theta}_{\tau, \bs{u}} - \bold \theta_{\tau, \bs{u}}) \overset{p}{\to} \mathcal{N} (\bs 0, {\bs M} {(\bold \theta_{\tau, \bs{u}})}^{-1} \bs D (\bold \theta_{\tau, \bs{u}}) {\bs M} {(\bold \theta_{\tau, \bs{u}})}^{-1}) \qquad \textnormal{as} \qquad n \rightarrow \infty,
\end{equation}
where $\bs D (\bold \theta_{\tau, \bs{u}})$ defines a $p \times p$ matrix:
\begin{equation}\label{eq:D}
\bs D (\bold \theta_{\tau, \bs{u}}) = \mathbb{E}[\eta^2_{\delta} (\varphi) \Psi(\bs{Y} - \bold \theta_{\tau, \bs{u}}) \Psi' (\bs{Y} - \bold \theta_{\tau, \bs{u}})].
\end{equation}
\end{theorem}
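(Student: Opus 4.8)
The plan is to treat $\widehat{\bold \theta}_{\tau, \bs{u}}$ as the $Z$-estimator (an M-estimator defined through an estimating equation) solving the empirical analogue of \eqref{eq:min}, namely $\frac{1}{n}\sum_{i=1}^n \psi(\bs Y_i, \bold \theta) = \bs 0$ with $\psi(\bs y, \bold \theta) = \eta_{\delta}(\varphi)\,\Psi(\bs y - \bold \theta)$, and to invoke the standard asymptotic-linearity machinery for such estimators (\citealt{huber2009robust, bahadur1966note}). Writing $G_n(\bold \theta) = \frac{1}{n}\sum_{i=1}^n \psi(\bs Y_i, \bold \theta)$ and $G(\bold \theta) = \mathbb{E}[\psi(\bs Y, \bold \theta)]$, the population equation $G(\bold \theta_{\tau, \bs{u}}) = \bs 0$ holds by the very definition \eqref{eq:min} of $\bold \theta_{\tau, \bs{u}}$, while $G_n(\widehat{\bold \theta}_{\tau, \bs{u}}) = \bs 0$ by construction. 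I would then proceed in three steps: consistency, asymptotic linearity, and the central limit step.

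First I would establish consistency, $\widehat{\bold \theta}_{\tau, \bs{u}} \overset{p}{\to} \bold \theta_{\tau, \bs{u}}$. Assumption A2 supplies the i.i.d.\ sampling and A3 the integrability needed for a pointwise law of large numbers, $G_n(\bold \theta) \overset{p}{\to} G(\bold \theta)$; promoting this to a uniform law of large numbers over a neighbourhood of $\bold \theta_{\tau, \bs{u}}$, together with the local identifiability guaranteed by the positive-definiteness of $\bs M(\bold \theta_{\tau, \bs{u}})$ in A5 (so that $\bold \theta_{\tau, \bs{u}}$ is a well-separated zero of $G$), yields consistency by the usual $Z$-estimator consistency theorem.

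The core is the asymptotic-linearity step, for which I would use the decomposition
\begin{equation}
\bs 0 = G_n(\widehat{\bold \theta}_{\tau, \bs{u}}) = G_n(\bold \theta_{\tau, \bs{u}}) + \big[ G(\widehat{\bold \theta}_{\tau, \bs{u}}) - G(\bold \theta_{\tau, \bs{u}}) \big] + \big[ (G_n - G)(\widehat{\bold \theta}_{\tau, \bs{u}}) - (G_n - G)(\bold \theta_{\tau, \bs{u}}) \big].
\end{equation}
For the deterministic bracket, I would show that $G$ is differentiable at $\bold \theta_{\tau, \bs{u}}$ with Jacobian $-\bs M(\bold \theta_{\tau, \bs{u}})$, exactly the matrix in \eqref{eq:M}, giving $G(\widehat{\bold \theta}_{\tau, \bs{u}}) - G(\bold \theta_{\tau, \bs{u}}) = -\bs M(\bold \theta_{\tau, \bs{u}})(\widehat{\bold \theta}_{\tau, \bs{u}} - \bold \theta_{\tau, \bs{u}}) + o_p(\|\widehat{\bold \theta}_{\tau, \bs{u}} - \bold \theta_{\tau, \bs{u}}\|)$. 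For the stochastic bracket, writing $\mathbb{G}_n = \sqrt{n}(G_n - G)$ for the empirical process, I would invoke its asymptotic equicontinuity over the Donsker class $\{\psi(\cdot, \bold \theta): \bold \theta \text{ near } \bold \theta_{\tau, \bs{u}}\}$ (square-integrable envelope by A4) together with consistency to conclude $\mathbb{G}_n(\widehat{\bold \theta}_{\tau, \bs{u}}) - \mathbb{G}_n(\bold \theta_{\tau, \bs{u}}) = o_p(1)$. Combining the brackets, multiplying through by $\sqrt{n}$, upgrading consistency to $\sqrt{n}$-consistency so that the $o(\|\cdot\|)$ remainders become $o_p(1)$, and using the invertibility of $\bs M(\bold \theta_{\tau, \bs{u}})$ from A5 yields
\begin{equation}
\sqrt{n}(\widehat{\bold \theta}_{\tau, \bs{u}} - \bold \theta_{\tau, \bs{u}}) = \bs M(\bold \theta_{\tau, \bs{u}})^{-1} \frac{1}{\sqrt{n}}\sum_{i=1}^n \eta_{\delta}(\varphi_i)\Psi(\bs Y_i - \bold \theta_{\tau, \bs{u}}) + o_p(1),
\end{equation}
which is the stated representation. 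The limiting law then follows from the multivariate Lindeberg--L\'evy central limit theorem applied to the i.i.d., mean-zero (by \eqref{eq:min}) summands $\eta_{\delta}(\varphi_i)\Psi(\bs Y_i - \bold \theta_{\tau, \bs{u}})$, whose covariance is $\bs D(\bold \theta_{\tau, \bs{u}})$ in \eqref{eq:D} and is finite by A4; Slutsky's theorem with the $\bs M(\bold \theta_{\tau, \bs{u}})^{-1}$ pre-multiplication delivers the sandwich covariance $\bs M^{-1}\bs D\,\bs M^{-1}$.

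The main obstacle is the non-smoothness of $\psi$: the Huber function $\Psi$ has a kink on the sphere $\|\bs r\| = c$ and is singular at $\bs r = \bs 0$, and $\eta_{\delta}(\varphi)$ is only continuous, not differentiable, at $\varphi = \pm \pi/2$, so $\psi$ cannot be differentiated pointwise and the derivative cannot be passed under the integral. Two places absorb this difficulty. The differentiability of the averaged map $G$ with Jacobian $-\bs M(\bold \theta_{\tau, \bs{u}})$ must be obtained by exploiting the absolute continuity and local boundedness of the density of $\bs Y$ in A1, which smooths the kink and the singularity upon integration; and the stochastic-equicontinuity control of the empirical-process bracket must be carried out for a class of non-smooth but bounded-variation (indeed uniformly Lipschitz away from the singularity) functions, where A4 supplies the square-integrable envelope. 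These two analytic points, rather than the algebra of the final inversion, are where the real work lies.
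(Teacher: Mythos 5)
Your proposal is correct and follows essentially the same route as the paper's proof: the standard $Z$-estimator asymptotics (consistency via a uniform law of large numbers, asymptotic linearity via differentiability of the population map $G$ with Jacobian $-\bs M(\bold \theta_{\tau, \bs{u}})$ plus stochastic equicontinuity of the empirical process, then the multivariate CLT and Slutsky for the sandwich covariance), which is exactly the Bahadur-type argument the assumptions A1--A5 are designed to support. You also correctly locate the only genuinely delicate points, namely that the non-smoothness of $\Psi$ and $\eta_{\delta}$ is absorbed by the density condition A1 when differentiating $G$ and by the Donsker/envelope conditions when controlling the empirical-process remainder.
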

\begin{proof}
See Proof of Theorem \ref{thm:bahac} in the Supplementary Materials.
\end{proof}



To prove consistency and asymptotic normality of $\widehat{\boldsymbol{\alpha}}_{\tau, \bs{u}}$, we exploit Theorem \ref{thm:bahac} and define the $k \times p$ matrix of OLS regression coefficients of ${IF} (\bs{Y}; \bold \theta_{\tau, \bs{u}})$ on $\bs X$:
\begin{equation}
\widehat{\bold \beta} (\bold \theta_{\tau, \bs{u}}) = \widehat{\Omega}^{-1}_{\bs X} \frac{1}{n} \sum_{i=1}^n \bs X_i \, IF'(\bs{Y}_i; \bold \theta_{\tau, \bs{u}})
\end{equation}
whose population counterpart is:
\begin{equation}
\bold \beta (\bold \theta_{\tau, \bs{u}}) = \Omega^{-1}_{\bs X} \mathbb{E}[ \bs X \, IF'(\bs{Y}; \bold \theta_{\tau, \bs{u}}) ].
\end{equation}
Also, let us denote for $i = 1,\dots,n$,
\begin{equation}
\bold \gamma^\star_i (\bold \theta_{\tau, \bs{u}}) = \textnormal{vec}\Big( \Omega^{-1}_{\bs X} \bs X_i \bs z'_i (\bold \theta_{\tau, \bs{u}}) \Big) \qquad \textnormal{and} \qquad \bs z_i (\bold \theta_{\tau, \bs{u}}) = {IF}(\bs{Y}_i; \bold \theta_{\tau, \bs{u}}) - \bold \beta' (\bold \theta_{\tau, \bs{u}}) \bs X_i.
\end{equation}
where the $\textnormal{vec} (\cdot)$ operator 
converts a matrix into a column vector by stacking its columns on top of one another. Finally, we define:
\begin{equation}
\widehat{\boldsymbol{\alpha}}_{\tau, \bs{u}}^\star = \textnormal{vec}(\widehat{\boldsymbol{\alpha}}_{\tau, \bs{u}}) \qquad \textnormal{and} \qquad \boldsymbol{\alpha}_{\tau, \bs{u}}^\star = \textnormal{vec}(\boldsymbol{\alpha}_{\tau, \bs{u}}).
\end{equation}

\begin{theorem}\label{thm:mUMQPE}
Let assumptions A1-A6 hold. Then, for any $\tau \in (0, \frac{1}{2}]$ and $\bs{u} \in \mathcal{S}^{p-1}$, the following asymptotic linear representation holds:
\begin{equation}\label{eq:UMQPEbahac}
\sqrt{n} (\widehat{\boldsymbol{\alpha}}_{\tau, \bs{u}}^\star - {\boldsymbol{\alpha}}_{\tau, \bs{u}}^\star) = \frac{1}{\sqrt{n}} \sum_{i=1}^n \bs S_i (\bold \theta_{\tau, \bs{u}}) + \omicron_p (1)
\end{equation}
and 
\begin{equation}\label{eq:UMQPEn}
\sqrt{n} (\widehat{\boldsymbol{\alpha}}_{\tau, \bs{u}}^\star - {\boldsymbol{\alpha}}_{\tau, \bs{u}}^\star) \overset{p}{\to} \mathcal{N} \Big(\bs 0, \mathbb{E}\Big[\bs S (\bold \theta_{\tau, \bs{u}}) \bs S' (\bold \theta_{\tau, \bs{u}})\Big]\Big) \qquad \textnormal{as} \qquad n \rightarrow \infty,
\end{equation}
where $\bs S_i (\bold \theta_{\tau, \bs{u}})$ is a $kp$-dimensional vector:
\begin{equation}
\bs S_i (\bold \theta_{\tau, \bs{u}}) = \nabla_{\bold \theta_{\tau, \bs{u}}} \bold \beta^\star (\bold \theta) \bs M (\bold \theta_{\tau, \bs{u}})^{-1} \eta_{\delta} (\varphi_i) \Psi(\bs{Y}_i - \bold \theta_{\tau, \bs{u}}) + \bold \gamma^\star_i (\bold \theta_{\tau, \bs{u}}),
\end{equation}
and $\nabla_{\bold \theta_{\tau, \bs{u}}} \bold \beta^\star (\bold \theta)$ is the derivative of $\bold \beta^\star (\bold \theta)$ with respect to $\bold \theta$ evaluated at $\bold \theta_{\tau, \bs{u}}$.
\end{theorem}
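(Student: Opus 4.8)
The plan is to treat $\widehat{\boldsymbol{\alpha}}_{\tau, \bs u}$ as a two-step estimator and to separate the randomness coming from the first-step estimation of $\bold \theta_{\tau, \bs u}$ from the sampling error of the least squares regression itself. Since, as already observed, the RIF and IF regressions yield identical slope coefficients and differ only in the intercept, it suffices to analyze the vectorized least squares coefficient $\textnormal{vec}\,\widehat{\bold \beta}(\widehat{\bold \theta}_{\tau, \bs u})$ against its target $\boldsymbol{\alpha}^\star_{\tau, \bs u} = \textnormal{vec}\,\bold \beta(\bold \theta_{\tau, \bs u})$. I would decompose the total error into a stochastic fluctuation and a smooth drift,
\begin{equation*}
\widehat{\boldsymbol{\alpha}}^\star_{\tau, \bs u} - \boldsymbol{\alpha}^\star_{\tau, \bs u} = \underbrace{\big[\textnormal{vec}\,\widehat{\bold \beta}(\widehat{\bold \theta}_{\tau, \bs u}) - \textnormal{vec}\,\bold \beta(\widehat{\bold \theta}_{\tau, \bs u})\big]}_{(\mathrm{I})} + \underbrace{\big[\textnormal{vec}\,\bold \beta(\widehat{\bold \theta}_{\tau, \bs u}) - \textnormal{vec}\,\bold \beta(\bold \theta_{\tau, \bs u})\big]}_{(\mathrm{II})},
\end{equation*}
and handle $(\mathrm{I})$ with an empirical-process argument and $(\mathrm{II})$ with a Taylor expansion of the smooth population map.

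For term $(\mathrm{I})$ I would first record the algebraic least squares identity $\widehat{\bold \beta}(\bold \theta) - \bold \beta(\bold \theta) = \widehat{\Omega}^{-1}_{\bs X}\frac{1}{n}\sum_{i}\bs X_i \bs z'_i(\bold \theta)$, valid at every $\bold \theta$, where $\bs z_i(\bold \theta) = IF(\bs Y_i; \bold \theta) - \bold \beta'(\bold \theta)\bs X_i$ are the population residuals and therefore satisfy $\mathbb{E}[\bs X\bs z'(\bold \theta)] = \bs 0$. Because these residuals are orthogonal to $\bs X$, the sampling error in $\widehat{\Omega}^{-1}_{\bs X}$ (governed by A6) multiplies a mean-zero $O_p(n^{-1/2})$ average and thus contributes only at order $O_p(n^{-1})$, negligible after scaling by $\sqrt n$. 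It then remains to evaluate the average at $\widehat{\bold \theta}_{\tau, \bs u}$ rather than at $\bold \theta_{\tau, \bs u}$: invoking stochastic equicontinuity of the centered empirical process $\bold \theta \mapsto n^{-1/2}\sum_{i}\bs X_i \bs z'_i(\bold \theta)$ together with the consistency $\widehat{\bold \theta}_{\tau, \bs u}\overset{p}{\to}\bold \theta_{\tau, \bs u}$ supplied by Theorem \ref{thm:bahac}, I may replace the random argument by the truth at the cost of an $o_p(1)$ term, obtaining $\sqrt n\,(\mathrm{I}) = \frac{1}{\sqrt n}\sum_{i=1}^n \bold \gamma^\star_i(\bold \theta_{\tau, \bs u}) + o_p(1)$, which is exactly the second summand of $\bs S_i$.

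For term $(\mathrm{II})$ the essential observation is that, although the integrand $\eta_\delta(\varphi)\Psi(\bs Y - \bold \theta)$ is only piecewise smooth in $\bold \theta$ (with kinks where $\Vert\bs Y - \bold \theta\Vert = c$ and at the angular boundaries of $\eta_\delta$), the population map $\bold \theta \mapsto \bold \beta(\bold \theta) = \Omega^{-1}_{\bs X}\mathbb{E}[\bs X\,IF'(\bs Y; \bold \theta)]$ is continuously differentiable, because integrating against the absolutely continuous law of $\bs Y$ (assumption A1) smooths out the pointwise non-differentiability. A first-order Taylor expansion then gives $(\mathrm{II}) = \nabla_{\bold \theta_{\tau, \bs u}}\bold \beta^\star(\bold \theta)\,(\widehat{\bold \theta}_{\tau, \bs u} - \bold \theta_{\tau, \bs u}) + o_p(n^{-1/2})$, the remainder being controlled by the $\sqrt n$-consistency of $\widehat{\bold \theta}_{\tau, \bs u}$, and substituting the Bahadur representation of Theorem \ref{thm:bahac} for $\sqrt n(\widehat{\bold \theta}_{\tau, \bs u} - \bold \theta_{\tau, \bs u})$ reproduces the first summand of $\bs S_i$, namely $\nabla_{\bold \theta_{\tau, \bs u}}\bold \beta^\star(\bold \theta)\,\bs M(\bold \theta_{\tau, \bs u})^{-1}\eta_\delta(\varphi_i)\Psi(\bs Y_i - \bold \theta_{\tau, \bs u})$. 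Adding the two contributions yields the linear representation \eqref{eq:UMQPEbahac}; the summands $\bs S_i(\bold \theta_{\tau, \bs u})$ are i.i.d.\ with mean zero by A2 and possess finite second moments by A4 together with A5--A6, so the Lindeberg--L\'{e}vy central limit theorem delivers the asymptotic normality \eqref{eq:UMQPEn} with covariance $\mathbb{E}[\bs S(\bold \theta_{\tau, \bs u})\bs S'(\bold \theta_{\tau, \bs u})]$.

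The step I expect to be the main obstacle is the control of term $(\mathrm{I})$, specifically the stochastic equicontinuity that allows $\widehat{\bold \theta}_{\tau, \bs u}$ to be replaced by $\bold \theta_{\tau, \bs u}$ inside the empirical process. This requires verifying that the class of functions $\bs X\,\bs z'(\bold \theta)$, indexed by $\bold \theta$ in a neighborhood of $\bold \theta_{\tau, \bs u}$, is Donsker (or at least satisfies an asymptotic equicontinuity condition), which should follow from the Lipschitz continuity of $\eta_\delta(\varphi)\Psi(\bs Y - \bold \theta)$ in $\bold \theta$ off the measure-zero kink set together with an integrable envelope guaranteed by A3--A4. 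The companion technical point is to justify differentiating $\bold \beta(\bold \theta)$ under the integral sign at $\bold \theta_{\tau, \bs u}$ in term $(\mathrm{II})$ despite the non-smooth integrand; once A1 is used to secure this interchange, the remaining manipulations are routine applications of Slutsky's theorem and the central limit theorem.
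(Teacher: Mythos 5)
Your proposal is correct and follows essentially the same route as the paper's proof: the decomposition into the second-stage OLS fluctuation $(\mathrm{I})$ (handled via the orthogonality $\mathbb{E}[\bs X\bs z'(\bold\theta)]=\bs 0$ and stochastic equicontinuity) plus the first-stage drift $(\mathrm{II})$ (a Taylor expansion of $\bold\beta^\star(\cdot)$ combined with the Bahadur representation of Theorem \ref{thm:bahac}) is exactly the argument, inherited from \cite{firpo2009unconditional}, that produces the two summands of $\bs S_i(\bold\theta_{\tau,\bs u})$. The remaining steps (mean-zero i.i.d.\ summands and the Lindeberg--L\'{e}vy CLT) coincide with the paper's conclusion.
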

\begin{proof}
See Proof of Theorem \ref{thm:mUMQPE} in the Supplementary Materials.
\end{proof}



\red{In addition, when multiple directions are of interest, we may estimate the proposed model at different directions simultaneously to gain better estimation accuracy by incorporating the associations among the considered directions. Hence, by proceeding as in the proof of Theorem \ref{thm:mUMQPE} and applying the multivariate Central Limit Theorem to the Bahadur representation in \eqref{eq:UMQPEbahac}, we derive the asymptotic distribution of the UMQPE estimator when multiple directions are considered jointly, as shown in the following remark.
\begin{remark}
Let $\{ \bs u_1, \dots, \bs u_{J} \} \subset \mathcal{S}^{p-1}$, with $J$ being a fixed positive integer. Suppose assumptions A1-A6 hold, then the joint asymptotic distribution of $\sqrt{n} \big( \widehat{\boldsymbol{\alpha}}_{\tau, \bs u_1}^\star - {\boldsymbol{\alpha}}_{\tau, \bs u_1}^\star, \dots, \widehat{\boldsymbol{\alpha}}_{\tau, \bs u_J}^\star - {\boldsymbol{\alpha}}_{\tau, \bs u_J}^\star \big)$
is Gaussian with zero mean and the asymptotic covariance matrix between $\sqrt{n} (\widehat{\boldsymbol{\alpha}}_{\tau, \bs u_r}^\star - {\boldsymbol{\alpha}}_{\tau, \bs u_r}^\star)$ and $\sqrt{n} (\widehat{\boldsymbol{\alpha}}_{\tau, \bs u_s}^\star - {\boldsymbol{\alpha}}_{\tau, \bs u_s}^\star)$, where $1 \leq r, s \leq J$, will be given by $\mathbb{E}\Big[\bs S (\bold \theta_{\tau, \bs u_r}) \bs S' (\bold \theta_{\tau, \bs u_s})\Big]$.
\end{remark}
}

In order to use Theorem \ref{thm:mUMQPE} to build confidence intervals and hypothesis tests, in what follows we provide a consistent estimator of the asymptotic covariance matrix of $\widehat{{\boldsymbol{\alpha}}}_{\tau, \bs{u}}^\star$ in \eqref{eq:UMQPEn}. The analytical form of the asymptotic covariance matrix suggests the following estimator:
\begin{equation}\label{eq:svUMQPE}
\widehat{\bs V} (\widehat{\bold \theta}_{\tau, \bs{u}}) =  \frac{1}{n} \sum_{i=1}^n \widehat{\bs S}_i (\widehat{\bold \theta}_{\tau, \bs{u}}) \widehat{\bs S}'_i (\widehat{\bold \theta}_{\tau, \bs{u}})
\end{equation}
where $\widehat{\bs S}_i (\widehat{\bold \theta}_{\tau, \bs{u}})$ is the $kp$-dimensional vector:
\begin{equation}
\widehat{\bs S}_i (\widehat{\bold \theta}_{\tau, \bs{u}}) = \Big( \nabla_{\widehat{\bold \theta}_{\tau, \bs{u}}} \widehat{\bold \beta}^\star ({\bold \theta}) \bs M (\widehat{\bold \theta}_{\tau, \bs{u}})^{-1} \eta_{\delta} (\widehat{\varphi}_i) \Psi(\bs{Y}_i - \widehat{\bold \theta}_{\tau, \bs{u}}) + \widehat{\bold \gamma}_i^\star (\widehat{\bold \theta}_{\tau, \bs{u}}) \Big)
\end{equation}
and $\nabla_{\widehat{\bold \theta}_{\tau, \bs{u}}} \widehat{\bold \beta}^\star ({\bold \theta})$ can be obtained via numerical differentiation.

\red{In order to establish consistency of the estimator in \eqref{eq:svUMQPE} we impose the following additional assumption.
\begin{enumerate}[label=(\roman*)]
\item[(A7)] $\bs S (\bold \theta_{\tau, \bs{u}})$ is continuous at $\bold \theta_{\tau, \bs u}$ and \redd{there exists a neighborhood of $\bold \theta_{\tau, \bs u}$, $\mathcal{I}$, such that for any $\tilde{\bold \theta}$ in this region, it holds that $\mathbb{E}[\underset{\small \tilde{\bold \theta} \in \mathcal{I}}{\sup} \Vert \bs S (\tilde{\bold \theta}) \bs S' (\tilde{\bold \theta}) \Vert] < \infty$.}
\end{enumerate}
Then, the next Theorem proves consistency of $\widehat{\bs V} (\widehat{\bold \theta}_{\tau, \bs{u}})$.
\begin{theorem}\label{thm:vmUMQPE}
Let assumptions A1-A7 hold, 
\begin{equation}
\widehat{\bs V} (\widehat{\bold \theta}_{\tau, \bs{u}}) - \mathbb{E}\Big[\bs S (\bold \theta_{\tau, \bs{u}}) \bs S' (\bold \theta_{\tau, \bs{u}})\Big] \overset{p}{\to} 0,
\end{equation}
where the notation is understood to indicate convergence of the matrices element by element.
\end{theorem}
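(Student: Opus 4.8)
The plan is to establish the convergence entrywise, reducing the matrix claim to scalar convergence in probability of each of the $kp \times kp$ coordinates of $\widehat{\bs V}(\widehat{\bold \theta}_{\tau, \bs u}) - \mathbb{E}[\bs S(\bold \theta_{\tau, \bs u})\bs S'(\bold \theta_{\tau, \bs u})]$. As a preliminary step I would record the consistency of every plug-in ingredient appearing in $\widehat{\bs S}_i$. Theorem \ref{thm:bahac} gives $\widehat{\bold \theta}_{\tau, \bs u} \overset{p}{\to} \bold \theta_{\tau, \bs u}$; assumption (A6) gives $\widehat{\Omega}_{\bs X} \overset{p}{\to} \Omega_{\bs X}$ with nonsingular limit, hence $\widehat{\Omega}_{\bs X}^{-1} \overset{p}{\to} \Omega_{\bs X}^{-1}$ by the continuous mapping theorem. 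A weak law of large numbers applied to $\bs X_i \, IF'(\bs Y_i; \bold \theta_{\tau, \bs u})$, whose moments are controlled by (A3)--(A4), together with Slutsky's theorem yields $\widehat{\bold \beta}(\bold \theta_{\tau, \bs u}) \overset{p}{\to} \bold \beta(\bold \theta_{\tau, \bs u})$. Finally (A1) makes $\bold \theta \mapsto \bs M(\bold \theta)$ continuous and (A5) keeps it invertible on a neighborhood of $\bold \theta_{\tau, \bs u}$, so $\bs M(\widehat{\bold \theta}_{\tau, \bs u})^{-1} \overset{p}{\to} \bs M(\bold \theta_{\tau, \bs u})^{-1}$, while the numerically differentiated $\nabla_{\widehat{\bold \theta}_{\tau, \bs u}}\widehat{\bold \beta}^\star$ converges to $\nabla_{\bold \theta_{\tau, \bs u}}\bold \beta^\star$ once the finite-difference step shrinks to zero.

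I would then chain two approximations. Let $\bs S_i(\tilde{\bold \theta})$ denote the summand built from the \emph{population} nuisances $\bold \beta^\star(\tilde{\bold \theta})$, $\bs M(\tilde{\bold \theta})^{-1}$, $\nabla\bold \beta^\star(\tilde{\bold \theta})$, $\Omega_{\bs X}^{-1}$, evaluated at a generic $\tilde{\bold \theta}$. The first approximation replaces the estimated nuisances inside $\widehat{\bs S}_i(\widehat{\bold \theta}_{\tau, \bs u})$ by their population counterparts while keeping $\widehat{\bold \theta}_{\tau, \bs u}$: because these nuisances are common to all $i$ and factor out of the average, and each is consistent by the preliminary step, $\widehat{\bs S}_i$ being a jointly continuous function of them, Slutsky's theorem gives $\frac{1}{n}\sum_{i=1}^n \widehat{\bs S}_i(\widehat{\bold \theta}_{\tau, \bs u})\widehat{\bs S}_i'(\widehat{\bold \theta}_{\tau, \bs u}) - \frac{1}{n}\sum_{i=1}^n \bs S_i(\widehat{\bold \theta}_{\tau, \bs u})\bs S_i'(\widehat{\bold \theta}_{\tau, \bs u}) = o_p(1)$. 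The second approximation handles the remaining plug-in of $\widehat{\bold \theta}_{\tau, \bs u}$ via a uniform law of large numbers over a neighborhood $\mathcal{I}$ of $\bold \theta_{\tau, \bs u}$: since $\Psi$ and the weight $\eta_\delta$ are continuous in their arguments — the kink of $\Psi$ at $\Vert \bs r \Vert = c$ being continuous because the two branches agree there, and $\eta_\delta$ matching at $\varphi = \pm \pi/2$ — and (A1) guarantees $\bs Y_i \neq \bold \theta$ almost surely, the map $\tilde{\bold \theta} \mapsto \bs S(\bs Y, \bs X; \tilde{\bold \theta})\bs S'(\bs Y, \bs X; \tilde{\bold \theta})$ is continuous at $\bold \theta_{\tau, \bs u}$, which is (A7). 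Together with the integrable envelope of (A7) and the i.i.d.\ sampling of (A2), a standard uniform law of large numbers delivers $\sup_{\tilde{\bold \theta}\in\mathcal{I}} \big\Vert \frac{1}{n}\sum_{i=1}^n \bs S_i(\tilde{\bold \theta})\bs S_i'(\tilde{\bold \theta}) - \mathbb{E}[\bs S(\tilde{\bold \theta})\bs S'(\tilde{\bold \theta})] \big\Vert \overset{p}{\to} 0$. Evaluating at $\tilde{\bold \theta} = \widehat{\bold \theta}_{\tau, \bs u}$, which lies in $\mathcal{I}$ with probability tending to one, and invoking continuity of $\tilde{\bold \theta}\mapsto\mathbb{E}[\bs S(\tilde{\bold \theta})\bs S'(\tilde{\bold \theta})]$ at $\bold \theta_{\tau, \bs u}$, I obtain $\frac{1}{n}\sum_{i=1}^n \bs S_i(\widehat{\bold \theta}_{\tau, \bs u})\bs S_i'(\widehat{\bold \theta}_{\tau, \bs u}) \overset{p}{\to} \mathbb{E}[\bs S(\bold \theta_{\tau, \bs u})\bs S'(\bold \theta_{\tau, \bs u})]$. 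Combining the two approximations gives the entrywise convergence and hence the theorem.

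The step I expect to be the main obstacle is the second approximation, precisely because $\widehat{\bold \theta}_{\tau, \bs u}$ enters the summands through the non-smooth map $\Psi$ and the angle-dependent weight $\eta_\delta$, so a plain law of large numbers is unavailable once a data-dependent argument is inserted into the terms. The resolution rests entirely on the uniform law of large numbers, for which the continuity-plus-domination content of (A7) is tailored; the delicate verification is that the kink of $\Psi$ at $\Vert \bs r \Vert = c$ does not destroy the continuity required for uniform convergence, which I would confirm directly from the matching of the two branches of $\Psi$, and of $\eta_\delta$, at their respective boundaries.
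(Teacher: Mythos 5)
Your argument is correct and follows essentially the same route the paper takes: assumption (A7) is stated precisely as the continuity-plus-integrable-envelope hypothesis of the standard uniform law of large numbers (\`a la Newey--McFadden), and the proof in the Supplementary Materials likewise combines consistency of $\widehat{\bold \theta}_{\tau, \bs u}$ and of the plug-in nuisances with that ULLN, evaluating at $\widehat{\bold \theta}_{\tau, \bs u} \in \mathcal{I}$ with probability tending to one. Your explicit checks that the two branches of $\Psi$ agree at $\Vert \bs r \Vert = c$ and those of $\eta_\delta$ at $\varphi = \pm \pi/2$ correctly dispose of the only potential continuity obstructions.
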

\begin{proof}
See Proof of Theorem \ref{thm:vmUMQPE} in the Supplementary Materials.
\end{proof}
}

\section{Application}\label{sec:app}

In this section, we consider data from the Survey on Household Income and Wealth (SHIW) 2016 conducted by the Bank of Italy to show the relevance of our methodology. We analyze the impact of economic and socio-demographic factors on households wealth and consumption levels, accounting both for the presence of outliers and the correlation structure between the two outcomes. We are interested in evaluating whether these effects are more pronounced on more disadvantaged families than on richer ones. In this setting, the limitation of using a conditional (M-)quantile model is that the effect of the covariates at different quantile levels may be masked by the set of conditioning variables, i.e., the characteristics of the family. Once we have conditioned on the explanatory variables, for instance, 
 the $10$-th (M-)quantile of the unconditional distribution of the responses may potentially be very different from the $10$-th (M-)quantile of the conditional distribution,
 so the coefficients of conditional (M-)quantile regression cannot be interpreted as unconditional effects.
 On the other hand, by using our unconditional method, 
 the UMQPE provides an estimate of the impact of covariates across the entire population and not merely among population subgroups, consisting of families who share the same values of the included covariates. 
 In what follows, we fit the proposed regression method at different points of the unconditional distributions of family wealth and consumption, and illustrate the difference between the conditional and unconditional approaches. 

\subsection{Data description}\label{sub:data}
The SHIW (\href{https://www.bancaditalia.it/statistiche/tematiche/indagini-famiglie-imprese/bilanci-famiglie/distribuzione-microdati/index.html}{\nolinkurl{https://www.bancaditalia.it}}) is an annual survey conducted by the Bank of Italy whose aims are to provide information on the economic and financial behaviours of Italian households and collect reliable, comparable and representative data of the population resident in Italy. 
 This survey is widely regarded as the basis of the most reliable estimates for macroeconomics studies. 
 The sample is drawn in two stages, the primary and secondary sampling units are municipalities and households, respectively. Before the primary units are selected, they are stratified by region and population size. Data are collected mainly via an electronic questionnaire using the Computer Assisted Personal Interviewing program while the remaining interviews are conducted using the Paper And Pencil Personal Interviewing program.
 
 In this work, following established custom we transform the dependent variables, i.e., household consumption (LCON) and net wealth (LWEA) to natural logarithm. In particular, LCON is defined as the sum of household's expenditure on durables and non-durable goods while net wealth is obtained as the algebraic sum of real assets, financial assets and financial liabilities. The set of considered predictors includes the log of net disposable income (LINC), defined as the sum of payroll income, pensions, net transfers, net self-employment income and property income sources, and relevant information on the household's head such as age (Age) and age squared (Age2) measured in years. Also, gender (male (baseline)), marital (married (baseline), never married, separated, widowed), employment status (employee (baseline), self-employed, not-employed) and educational level (elementary (baseline), middle, vocational, high school, university or higher) are included as dummy variables. Finally, a categorical variable is included to investigate the presence of regional divergences in wealth and consumption levels depending on the region of residence (north (baseline), centre, south and islands). Table S7 in the Supplementary Materials summarizes the descriptive statistics of the included variables. The considered sample contains 6802 households. 
 
  
 As a preliminary step, we study the unconditional distributions of households wealth and consumption. The histograms of LWEA and LCON unconditional distributions in Figure S5 of the Supplementary Materials \red{reveal that, while normality seems tenable for LCON, there are potentially influential observations in the distribution of LWEA and indicate a departure from the Gaussian assumption,} 
 having fat tails and pronounced asymmetries. Furthermore, the empirical correlation between LCON and LWEA equals to 0.473. As expected, consumption and wealth are positively correlated, justifying the need for a multivariate approach that considers these two dimensions together. \red{The discreteness of LWEA that shows up especially at low values of wealth is due to the rounding effect of the interviews towards round figures (\citealt{groves2011survey}).} 
 Consequently, the presented unconditional regression model is appropriate to account for outlying observations and investigate how the relationship between responses and explanatory variables can vary across the unconditional distribution of family wealth and consumption.


\subsection{Modeling household wealth and consumption}\label{sub:mul} 
We analyze the SHIW 2016 data to jointly model households' log-wealth, LWEA, and log-consumption, LCON, as a function of the predictors in Table S7. \red{We fit the proposed approach at levels $\tau = 0.10$, $\tau = 0.50$ and $\tau = 0.90$, which can estimated by simply noting that for $\tau \in (0, \frac{1}{2}]$, $\bold \theta_{1 - \tau, \bs{u}} = \bold \theta_{\tau, -\bs{u}}$ (see \citealt{kokic2002new}). 
 As explained in Section \ref{sec:ummq}, a meaningful direction $\bs u$ shall be determined for ordering multivariate observations, taking into account the problem under consideration. 
 A possible way to do this is to observe that ratios of the type $Z = \log \Big( \frac{Y^{u_1}_1}{Y^{-u_2}_2} \Big) = \bs u' (\log Y_1, \log Y_2)$ bears great importance in several fields like finance, economics and growth models, with $Y_1$ and $Y_2$ representing wealth and consumption at the household level, respectively. In order to explicitly account for the existing positive correlation between wealth and consumption, we use the direction $\bs u = (\frac{1}{\sqrt{2}},\frac{1}{\sqrt{2}})'$ throughout the rest of the section (the analysis with $\bs u = (\frac{1}{\sqrt{2}},-\frac{1}{\sqrt{2}})'$ is shown in the Supplementary Materials). 
 To give a graphical intuition on the use of such direction, Figure S6 in the Supplementary Materials shows the scatter plots of family wealth and consumption where $\bs u$ is represented by the green arrow, while the blue points denote the multivariate quantile ${\bold \theta}_{\tau, \bs u}$ with $c=0$ at $\tau = 0.10$ (left) and $\tau = 0.90$ (left) as an example. The black dashed lines correspond to the hyperplanes orthogonal to $\bs u$ and passing through ${\bold \theta}_{\tau, \bs u}$ where datapoints in the upper half-plane are shown in red while those in the lower half-plane are shown in gray. In doing so, the sample space is divided into two regions by the hyperplane orthogonal to the selected direction 
 and any observation in the region in direction $\bs u$ from ${\bold \theta}_{\tau, \bs u}$ receives a certain weight, while observations in the opposite direction receive a different weight depending both on their length and the angle they form with $\bs u$, according to the weighting function $\eta_\delta (\varphi)$ below \eqref{eq:min}. 
 In particular, at $\tau = 0.10$, households below the half-plane generally have low or moderate levels of consumption and wealth, and a certain weight will be assigned to them compared to wealthier and high spender families located in the other region. When $\tau = 0.90$, the multivariate quantile shifts upwards to the right and thus also the separating hyperplane so that it distinguishes families with jointly high consumption and wealth patterns who will be weighted differently than all those below as shown in gray.}
  
 The estimation of the optimal tuning constant $c^\star$ is obtained using a 5-fold cross-validation over a sequence of 200 possible values as described in Section \ref{sub:est}. The results are reported in Table \ref{tab:coefmcstar} ($c = c^\star$) where we compare the proposed unconditional regression with the standard conditional regression approach. \red{In particular, the left-hand side columns (labeled as Unconditional Regression) refer to the estimates of the UMQPE, $\boldsymbol{\alpha}_{\tau, \bs{u}}$, obtained as described in \eqref{eq:ummqpee}. Conversely, the right-hand columns (labeled as Conditional Regression) refer to the estimates of the coefficients $\tilde{\bold \beta}_{\tau, \bs{u}}$ of a multivariate linear conditional M-quantile regression, i.e., $\boldsymbol{\theta}_{\tau, \bs{u}} (\bs x) = \tilde{\bold \beta}'_{\tau, \bs{u}} \bs x$, obtained using the IRLS algorithm mentioned in Section \ref{sec:ummq}.} 
 To display the sampling variation, asymptotic standard errors obtained using the results in Section \ref{sub:asym} are presented in parentheses. Parameter estimates are displayed in boldface when significant at the standard 5\% level. 
\\
The main findings can be summarised as follows. The selected values of $c$ are \red{1.373, 10.966 and 10.378 at level 0.10, 0.50 and 0.90, respectively. This implies that the estimates are close to the quantile case at $\tau = 0.10$ as more than 77\% of residuals are down-weighted (Huberised) meanwhile, at $\tau = 0.50$ and $\tau = 0.90$ correspond to expectile estimation as no observations are Huberised.} The estimated values for $c$ reflect the negatively skewed distribution of wealth and support the exploratory analysis in Figure S5. 
\\
Because the selected values of $c$ lead towards expectile estimation, especially above $\tau = 0.50$, we also consider the case $c=0$ (see Table S8 of the Supplementary Materials) which allows us to estimate the impact of the covariates on the unconditional multivariate quantiles of the response. Comparing Tables \ref{tab:coefmcstar} and S8, 
 slight differences in terms of estimation can be found. This is attributable to the choice of $c$ as the two models allow to target different population parameters by selecting different values for the tuning constant of the Huber function. The above points demonstrate the flexibility of the methodology proposed to extend the classical OLS regression for assessing the effect of covariates, not only at the center, but also at different parts of the unconditional distribution of interest.
\\
Nevertheless, there are still similar results between Tables \ref{tab:coefmcstar} and S8. Point estimates generally increase in magnitude when moving outward from the bulk of the data. Income elasticity is positively associated with both wealth and consumption for all investigated $\tau$ levels. One can see that there are small differences in consumption expenditure among males and females. Moreover, education, marital and employment status are important determinants of family's consumption and wealth levels, with an increasing trend as $\tau \rightarrow 0$. There also appears to be significant regional disparities across the distributions of the responses as southern regions and islands generally have lower levels of wealth and consumption. 
 It is important to note that the effect of education and marital status between the conditional and unconditional models is very different, especially at $c=0$. As shown in Theorem \ref{thm:wavem}, this corresponds to the case where large differences exist between the UMQPE and the CMQPE. This may be due to the fact that the matching, $s_{\tau, \bs{u}} (\bs X)$, and weighting, $\bs W_{\tau, \bs{u}} (\bs X)$, functions vary across the values of $\bs X$, which means that the conditional effects do not average up to their respective unconditional effects. By contrast, the conditional and unconditional models provide similar estimates for \red{LINC, age and employment status at $c = c^\star$ in particular, suggesting that $s_{\tau, \bs u} (\bs X) \approx \tau$ does not vary very much for all values of $\bs X$. In this case, we have that $\boldsymbol{\alpha}_{\tau, \bs{u}} = \mathbb{E}[\bs W_{\tau, \bs{u}} (\bs X) \, \boldsymbol{\alpha}_{s_{\tau, \bs u} (\bs X), \bs u} (\bs X)] \approx \mathbb{E}[\bs W_{\tau, \bs{u}} (\bs X) \, \boldsymbol{\alpha}_{\tau, \bs u} (\bs X)]$. Under the considered linear model targeting the conditional multivariate M-quantile $\boldsymbol{\theta}_{\tau, \bs{u}} (\bs x)$, the CMQPE, $\boldsymbol{\alpha}_{\tau, \bs{u}} (\bs x) = \tilde{\bold \beta}_{\tau, \bs{u}}$, which implies that $\boldsymbol{\alpha}_{\tau, \bs{u}} \approx \tilde{\bold \beta}_{\tau, \bs{u}}$.} 
\\
As a means of further comparison, we analyze the log-levels of wealth and consumption with the UQR of \cite{firpo2009unconditional} by fitting two univariate models independently. Table S9 of the Supplementary Materials reports the corresponding parameter estimates and standard errors at the examined $\tau$ levels.
 We can observe that the results are generally in line with the \red{estimates from the proposed multivariate unconditional quantile regression in Table S8. However, some differences can be identified at $\tau = 0.10$} due to the selected direction $\bs u$ and the fact that the univariate UQR completely disregards the dependence between wealth and consumption. By contrast, the proposed model allows to study the direction and magnitude of such correlation at different levels $\tau$. In particular, using \eqref{eq:cor} we represent in Tables \ref{tab:coefmcstar} and S8 the estimated correlation coefficient, $r_{12}$, which indicates that consumption and wealth are strongly correlated with each other and this association slightly decreases for households at the upper end of the responses distribution. \red{At $\tau = 0.10$, the estimated coefficients (0.324 and 0.308) suggest that low-consumption households are likely to be accompanied by low wealth, with an increasing pattern as we move towards the $\tau = 0.50$ and $\tau = 0.90$ levels.}
\\
\redd{We conclude the analysis by using the nonparametric estimator of the UMQPE described in Section \ref{sub:est} to take into account possible nonlinear effects of the covariates. That is, we model the $\mathbb{E}[RIF(\bs Y; \bold \theta_{\tau, \bs u}) \mid \bs X]$ including cubic B-splines specified for income and age. To test the linearity assumption between the RIF and the covariates, we conduct a goodness-of-fit test based on the Pillai's Trace test statistic (\citealt{rencher2012methods}, see Table S11 in the Supplementary Materials). The results provides evidence that the model with B-splines fits better than the linear specification for the RIF regression. 
 In Table \ref{tab:coefmcstar_spline} we thus show the estimated UMQPEs using the nonparametric method at $\tau = 0.10$, $\tau = 0.50$ and $\tau = 0.90$ for the optimal tuning constant $c^\star$, where 95\% confidence intervals (in parentheses) are obtained via nonparametric bootstrap (\citealt{efron1994introduction}). 
 By looking at the results, the estimated UMQPEs are similar, in terms of both sign and magnitude, to the estimates in Table \ref{tab:coefmcstar} where the RIF is modeled as a linear function of the covariates.}


\redd{Finally, in Section 6 of the Supplementary Materials we provide a graphical representation of how well the considered approach can approximate the effect of more general changes in distribution of the covariates such as those contemplated in the policy effect (\citealt{RePEc:nbr:nberte:0339}).}

\section{Conclusions}\label{sec:con}
Extending the univariate work of \cite{firpo2009unconditional}, this paper proposes a unified approach to model the entire unconditional distribution of a multivariate response variable in a regression setting. We make several contributions to the literature. First, by employing the multidimensional Huber's function in \cite{hampel2011robust} we are able to build a comprehensive modeling framework to estimate multivariate unconditional quantiles, M-quantiles and expectiles, choosing the tuning constant in an appropriate manner. Second, in contrast to univariate methods, our multivariate model accounts for the, potentially asymmetric, association structure between the outcome variables. Third, the proposed methodology is easy to implement through an OLS regression of the RIF on the explanatory variables. From a theoretical standpoint, we show that the introduced estimators are consistent, asymptotically normal and can be written as a weighted average 
 of conditional effects. In addition, we propose a data-driven procedure based on cross-validation to select the optimal tuning constant for estimating the UMQPE \red{and the policy effect}. 
 Finally, we contribute to the empirical literature by analyzing log-levels of wealth and consumption of Italian households collected in the SHIW 2016 data.


Possible future developments of this work are as follows. First, the independence assumption in Theorem \ref{thm:wavem} can be relaxed by introducing in the RIF regression a control function constructed using instrumental variables to account for endogenous covariates. \redd{Second, the extension of this result to the case of heteroskedastic errors would be an interesting topic for future research.} 
 Lastly, it would also be interesting the study of the theoretical behavior of the multivariate M-quantile cross-validated estimator when selecting the tuning constant of the Huber's influence function.

\bigskip
\begin{center}
{\large\bf SUPPLEMENTARY MATERIALS}
\end{center}

\begin{description}
\item[Simulations, additional results and proofs:] Simulation studies, additional results and technical derivations that are used to support the results in the manuscript. (PDF file)
\end{description}

\bigskip
\begin{center}
{\large\bf FUNDING}
\end{center}

\begin{description}
\item[Competing interests:] The authors report there are no competing interests to declare.
\end{description}
\bibliographystyle{Chicago}

\bibliography{biblio_manuscript}
\clearpage

\pagestyle{empty}
\begin{table}[!h]
\centering
 \smallskip 
 \resizebox{0.85\columnwidth}{!}{
\begin{tabular}{lcccccccccccc}
\toprule
 & \multicolumn{6}{c}{Unconditional Regression} & \multicolumn{6}{c}{Conditional Regression} \\\cmidrule(r){2-7} \cmidrule(r){8-13}
$\tau$ & \multicolumn{2}{c}{0.10} & \multicolumn{2}{c}{0.50} & \multicolumn{2}{c}{0.90} & \multicolumn{2}{c}{0.10} & \multicolumn{2}{c}{0.50} & \multicolumn{2}{c}{0.90} \\\cmidrule(r){2-3} \cmidrule(r){4-5} \cmidrule(r){6-7} \cmidrule(r){8-9} \cmidrule(r){10-11} \cmidrule(r){12-13}
Variable & W & C & W & C & W & C & W & C & W & C & W & C\\
\hline
Intercept         & $\mathbf{-29.370}$ & $\mathbf{5.960}$  & $\mathbf{-5.701}$ & $\mathbf{5.472}$  & $-0.514$          & $\mathbf{4.742}$  & $\mathbf{-15.840}$ & $\mathbf{4.348}$  & $\mathbf{-5.701}$ & $\mathbf{5.472}$  & $0.003$           & $\mathbf{6.156}$  \\
                  & $(1.243)$          & $(0.159)$         & $(0.390)$         & $(0.095)$         & $(0.302)$         & $(0.188)$         & $(0.529)$          & $(0.121)$         & $(0.389)$         & $(0.094)$         & $(0.297)$         & $(0.122)$         \\
LINC              & $\mathbf{2.976}$   & $\mathbf{0.410}$  & $\mathbf{1.315}$  & $\mathbf{0.435}$  & $\mathbf{0.993}$  & $\mathbf{0.496}$  & $\mathbf{2.156}$   & $\mathbf{0.547}$  & $\mathbf{1.315}$  & $\mathbf{0.435}$  & $\mathbf{0.914}$  & $\mathbf{0.361}$  \\
                  & $(0.105)$          & $(0.016)$         & $(0.032)$         & $(0.008)$         & $(0.025)$         & $(0.015)$         & $(0.044)$          & $(0.010)$         & $(0.032)$         & $(0.008)$         & $(0.025)$         & $(0.010)$         \\
Sex               & $0.044$            & $\mathbf{-0.035}$ & $0.008$           & $\mathbf{-0.028}$ & $-0.001$          & $-0.011$          & $-0.022$           & $\mathbf{-0.037}$ & $0.008$           & $\mathbf{-0.028}$ & $0.016$           & $\mathbf{-0.030}$ \\
                  & $(0.134)$          & $(0.016)$         & $(0.045)$         & $(0.011)$         & $(0.033)$         & $(0.017)$         & $(0.061)$          & $(0.014)$         & $(0.045)$         & $(0.011)$         & $(0.034)$         & $(0.014)$         \\
Age               & $\mathbf{0.219}$   & $\mathbf{-0.014}$ & $\mathbf{0.081}$  & $0.001$           & $\mathbf{0.055}$  & $\mathbf{0.012}$  & $\mathbf{0.096}$   & $-0.003$          & $\mathbf{0.081}$  & $0.001$           & $\mathbf{0.058}$  & $\mathbf{0.007}$  \\
                  & $(0.025)$          & $(0.003)$         & $(0.008)$         & $(0.002)$         & $(0.006)$         & $(0.003)$         & $(0.011)$          & $(0.003)$         & $(0.008)$         & $(0.002)$         & $(0.006)$         & $(0.003)$         \\
Age2              & $\mathbf{-0.002}$  & $\mathbf{0.000}$  & $\mathbf{-0.001}$ & $-0.000$          & $\mathbf{-0.000}$ & $\mathbf{-0.000}$ & $\mathbf{-0.001}$  & $0.000$           & $\mathbf{-0.001}$ & $-0.000$          & $\mathbf{-0.000}$ & $\mathbf{-0.000}$ \\
                  & $(0.000)$          & $(0.000)$         & $(0.000)$         & $(0.000)$         & $(0.000)$         & $(0.000)$         & $(0.000)$          & $(0.000)$         & $(0.000)$         & $(0.000)$         & $(0.000)$         & $(0.000)$         \\
Marital status \\
\quad never married     & $\mathbf{0.523}$   & $\mathbf{-0.130}$ & $0.094$           & $\mathbf{-0.152}$ & $0.006$           & $\mathbf{-0.158}$ & $\mathbf{0.348}$   & $\mathbf{-0.101}$ & $0.094$           & $\mathbf{-0.152}$ & $0.010$           & $\mathbf{-0.175}$ \\
                  & $(0.170)$          & $(0.023)$         & $(0.056)$         & $(0.014)$         & $(0.042)$         & $(0.021)$         & $(0.077)$          & $(0.018)$         & $(0.056)$         & $(0.014)$         & $(0.043)$         & $(0.018)$         \\
\quad separated         & $\mathbf{-0.845}$  & $\mathbf{-0.056}$ & $\mathbf{-0.303}$ & $\mathbf{-0.104}$ & $\mathbf{-0.189}$ & $\mathbf{-0.141}$ & $\mathbf{-0.250}$  & $-0.041$          & $\mathbf{-0.303}$ & $\mathbf{-0.104}$ & $\mathbf{-0.239}$ & $\mathbf{-0.156}$ \\
                  & $(0.214)$          & $(0.026)$         & $(0.071)$         & $(0.017)$         & $(0.053)$         & $(0.027)$         & $(0.097)$          & $(0.022)$         & $(0.071)$         & $(0.017)$         & $(0.054)$         & $(0.022)$         \\
\quad widowed           & $\mathbf{0.408}$   & $\mathbf{-0.098}$ & $0.067$           & $\mathbf{-0.098}$ & $-0.008$          & $\mathbf{-0.084}$ & $\mathbf{0.323}$   & $-0.029$          & $0.067$           & $\mathbf{-0.098}$ & $-0.032$          & $\mathbf{-0.135}$ \\
                  & $(0.191)$          & $(0.025)$         & $(0.063)$         & $(0.015)$         & $(0.047)$         & $(0.024)$         & $(0.086)$          & $(0.020)$         & $(0.063)$         & $(0.015)$         & $(0.048)$         & $(0.020)$         \\
Education level \\
\quad middle school     & $\mathbf{0.434}$   & $\mathbf{0.125}$  & $\mathbf{0.244}$  & $\mathbf{0.102}$  & $\mathbf{0.173}$  & $\mathbf{0.059}$  & $\mathbf{0.274}$   & $\mathbf{0.085}$  & $\mathbf{0.244}$  & $\mathbf{0.102}$  & $\mathbf{0.204}$  & $\mathbf{0.118}$  \\
                  & $(0.174)$          & $(0.021)$         & $(0.058)$         & $(0.014)$         & $(0.043)$         & $(0.022)$         & $(0.079)$          & $(0.018)$         & $(0.058)$         & $(0.014)$         & $(0.044)$         & $(0.018)$         \\
\quad vocational school & $0.398$            & $\mathbf{0.133}$  & $\mathbf{0.299}$  & $\mathbf{0.117}$  & $\mathbf{0.278}$  & $\mathbf{0.065}$  & $\mathbf{0.303}$   & $\mathbf{0.064}$  & $\mathbf{0.299}$  & $\mathbf{0.117}$  & $\mathbf{0.296}$  & $\mathbf{0.165}$  \\
                  & $(0.242)$          & $(0.029)$         & $(0.081)$         & $(0.020)$         & $(0.060)$         & $(0.031)$         & $(0.110)$          & $(0.025)$         & $(0.081)$         & $(0.020)$         & $(0.062)$         & $(0.025)$         \\
\quad high school       & $\mathbf{1.006}$   & $\mathbf{0.156}$  & $\mathbf{0.611}$  & $\mathbf{0.189}$  & $\mathbf{0.510}$  & $\mathbf{0.191}$  & $\mathbf{0.550}$   & $\mathbf{0.128}$  & $\mathbf{0.611}$  & $\mathbf{0.189}$  & $\mathbf{0.548}$  & $\mathbf{0.246}$  \\
                  & $(0.189)$          & $(0.024)$         & $(0.063)$         & $(0.015)$         & $(0.047)$         & $(0.024)$         & $(0.086)$          & $(0.020)$         & $(0.063)$         & $(0.015)$         & $(0.048)$         & $(0.020)$         \\
\quad university        & $\mathbf{0.518}$   & $\mathbf{0.126}$  & $\mathbf{0.656}$  & $\mathbf{0.276}$  & $\mathbf{0.849}$  & $\mathbf{0.440}$  & $\mathbf{0.429}$   & $\mathbf{0.169}$  & $\mathbf{0.656}$  & $\mathbf{0.276}$  & $\mathbf{0.670}$  & $\mathbf{0.379}$  \\
                  & $(0.229)$          & $(0.028)$         & $(0.076)$         & $(0.019)$         & $(0.057)$         & $(0.030)$         & $(0.104)$          & $(0.024)$         & $(0.076)$         & $(0.019)$         & $(0.058)$         & $(0.024)$         \\
Employment status \\
\quad self-employed     & $\mathbf{1.791}$   & $\mathbf{-0.071}$ & $\mathbf{0.942}$  & $0.032$           & $\mathbf{0.947}$  & $\mathbf{0.153}$  & $\mathbf{0.842}$   & $-0.028$          & $\mathbf{0.942}$  & $0.032$           & $\mathbf{0.812}$  & $\mathbf{0.066}$  \\
                  & $(0.206)$          & $(0.025)$         & $(0.068)$         & $(0.017)$         & $(0.051)$         & $(0.027)$         & $(0.093)$          & $(0.021)$         & $(0.068)$         & $(0.017)$         & $(0.052)$         & $(0.021)$         \\
not-employed      & $\mathbf{1.524}$   & $-0.001$          & $\mathbf{0.654}$  & $\mathbf{0.041}$  & $\mathbf{0.575}$  & $\mathbf{0.111}$  & $\mathbf{0.658}$   & $0.021$           & $\mathbf{0.654}$  & $\mathbf{0.041}$  & $\mathbf{0.582}$  & $\mathbf{0.045}$  \\
                  & $(0.180)$          & $(0.023)$         & $(0.060)$         & $(0.015)$         & $(0.045)$         & $(0.023)$         & $(0.082)$          & $(0.019)$         & $(0.060)$         & $(0.015)$         & $(0.046)$         & $(0.019)$         \\
Geographical area \\
\quad centre            & $\mathbf{0.681}$   & $\mathbf{0.046}$  & $\mathbf{0.213}$  & $\mathbf{0.024}$  & $\mathbf{0.080}$  & $-0.001$          & $\mathbf{0.325}$   & $\mathbf{0.056}$  & $\mathbf{0.213}$  & $\mathbf{0.024}$  & $\mathbf{0.107}$  & $0.005$           \\
                  & $(0.143)$          & $(0.017)$         & $(0.048)$         & $(0.012)$         & $(0.036)$         & $(0.018)$         & $(0.065)$          & $(0.015)$         & $(0.048)$         & $(0.012)$         & $(0.036)$         & $(0.015)$         \\
\quad south and islands & $\mathbf{1.015}$   & $\mathbf{-0.073}$ & $\mathbf{0.196}$  & $\mathbf{-0.102}$ & $0.017$           & $\mathbf{-0.100}$ & $\mathbf{0.359}$   & $\mathbf{-0.055}$ & $\mathbf{0.196}$  & $\mathbf{-0.102}$ & $0.055$           & $\mathbf{-0.154}$ \\
                  & $(0.133)$          & $(0.019)$         & $(0.044)$         & $(0.011)$         & $(0.033)$         & $(0.017)$         & $(0.060)$          & $(0.014)$         & $(0.044)$         & $(0.011)$         & $(0.034)$         & $(0.014)$         \\
\\
$r_{12}$ & \multicolumn{2}{c}{0.324} & \multicolumn{2}{c}{0.473} & \multicolumn{2}{c}{0.653} \\
\bottomrule
\end{tabular}}
\caption{\footnotesize Unconditional and conditional regression coefficient estimates \redd{obtained from the linear specification for the RIF} at the investigated $\tau$ levels and direction $\bs u = (\frac{1}{\sqrt{2}},\frac{1}{\sqrt{2}})'$, using the optimal tuning constant $c = c^\star$. Parameter estimates are displayed in boldface when significant at the standard 5\% level.}\label{tab:coefmcstar}
\end{table}

\begin{table}[!h]
\centering
 \smallskip 
 \resizebox{0.85\columnwidth}{!}{
\begin{tabular}{lcccccccccccc}
\toprule
 & \multicolumn{6}{c}{Unconditional Regression} & \multicolumn{6}{c}{Conditional Regression} \\\cmidrule(r){2-7} \cmidrule(r){8-13}
$\tau$ & \multicolumn{2}{c}{0.10} & \multicolumn{2}{c}{0.50} & \multicolumn{2}{c}{0.90} & \multicolumn{2}{c}{0.10} & \multicolumn{2}{c}{0.50} & \multicolumn{2}{c}{0.90} \\\cmidrule(r){2-3} \cmidrule(r){4-5} \cmidrule(r){6-7} \cmidrule(r){8-9} \cmidrule(r){10-11} \cmidrule(r){12-13}
Variable & W & C & W & C & W & C & W & C & W & C & W & C\\
\hline
Intercept         & $-0.107$          & $\mathbf{11.293}$ & $\mathbf{7.275}$ & $\mathbf{9.798}$  & $\mathbf{7.230}$ & $\mathbf{7.708}$  & $\mathbf{5.450}$  & $\mathbf{9.264}$  & $\mathbf{7.275}$ & $\mathbf{9.798}$  & $\mathbf{6.342}$  & $\mathbf{9.943}$  \\
                  & $(6.859)$         & $(1.449)$         & $(1.669)$        & $(0.443)$         & $(0.874)$        & $(1.060)$         & $(1.669)$         & $(0.443)$         & $(1.669)$        & $(0.443)$         & $(0.766)$         & $(0.401)$         \\
LINC              & $\mathbf{3.383}$  & $\mathbf{0.463}$  & $\mathbf{1.662}$ & $\mathbf{0.587}$  & $\mathbf{1.416}$ & $\mathbf{0.765}$  & $\mathbf{2.254}$  & $\mathbf{0.551}$  & $\mathbf{1.662}$ & $\mathbf{0.587}$  & $\mathbf{1.340}$  & $\mathbf{0.639}$  \\
                  & $(0.343)$         & $(0.076)$         & $(0.083)$        & $(0.034)$         & $(0.182)$        & $(0.140)$         & $(0.083)$         & $(0.034)$         & $(0.083)$        & $(0.034)$         & $(0.089)$         & $(0.047)$         \\
Sex               & $0.114$           & $-0.017$          & $0.028$          & $\mathbf{-0.023}$ & $-0.006$         & $-0.021$          & $-0.006$          & $\mathbf{-0.039}$ & $0.028$          & $\mathbf{-0.023}$ & $0.042$           & $-0.015$          \\
                  & $(0.135)$         & $(0.016)$         & $(0.045)$        & $(0.011)$         & $(0.033)$        & $(0.016)$         & $(0.045)$         & $(0.011)$         & $(0.045)$        & $(0.011)$         & $(0.037)$         & $(0.013)$         \\
Age               & $\mathbf{0.060}$  & $\mathbf{0.046}$  & $\mathbf{0.112}$ & $\mathbf{-0.007}$ & $\mathbf{0.096}$ & $\mathbf{-0.006}$ & $\mathbf{0.223}$  & $\mathbf{0.043}$  & $\mathbf{0.112}$ & $\mathbf{-0.007}$ & $\mathbf{-0.024}$ & $\mathbf{-0.045}$ \\
                  & $(0.002)$         & $(0.000)$         & $(0.001)$        & $(0.000)$         & $(0.000)$        & $(0.000)$         & $(0.002)$         & $(0.000)$         & $(0.002)$        & $(0.000)$         & $(0.002)$         & $(0.001)$         \\
Marital status \\
\quad never married     & $\mathbf{0.634}$  & $\mathbf{-0.123}$ & $\mathbf{0.202}$ & $\mathbf{-0.102}$ & $\mathbf{0.152}$ & $\mathbf{-0.059}$ & $\mathbf{0.342}$  & $\mathbf{-0.097}$ & $\mathbf{0.202}$ & $\mathbf{-0.102}$ & $\mathbf{0.119}$  & $\mathbf{-0.096}$ \\
                  & $(0.176)$         & $(0.022)$         & $(0.058)$        & $(0.013)$         & $(0.038)$        & $(0.018)$         & $(0.058)$         & $(0.013)$         & $(0.058)$        & $(0.013)$         & $(0.046)$         & $(0.014)$         \\
\quad separated         & $\mathbf{-0.676}$ & $-0.029$          & $-0.149$         & $\mathbf{-0.037}$ & $-0.003$         & $-0.027$          & $\mathbf{-0.193}$ & $\mathbf{-0.039}$ & $-0.149$         & $\mathbf{-0.037}$ & $-0.079$          & $\mathbf{-0.042}$ \\
                  & $(0.239)$         & $(0.028)$         & $(0.079)$        & $(0.017)$         & $(0.052)$        & $(0.023)$         & $(0.079)$         & $(0.017)$         & $(0.079)$        & $(0.017)$         & $(0.063)$         & $(0.019)$         \\
\quad widowed           & $\mathbf{0.520}$  & $\mathbf{-0.083}$ & $\mathbf{0.184}$ & $\mathbf{-0.046}$ & $\mathbf{0.144}$ & $0.012$           & $\mathbf{0.359}$  & $-0.022$          & $\mathbf{0.184}$ & $\mathbf{-0.046}$ & $0.089$           & $\mathbf{-0.052}$ \\
                  & $(0.191)$         & $(0.023)$         & $(0.061)$        & $(0.015)$         & $(0.044)$        & $(0.021)$         & $(0.061)$         & $(0.015)$         & $(0.061)$        & $(0.015)$         & $(0.047)$         & $(0.017)$         \\
Education level \\
\quad middle school     & $0.339$           & $\mathbf{0.109}$  & $\mathbf{0.193}$ & $\mathbf{0.080}$  & $\mathbf{0.128}$ & $\mathbf{0.031}$  & $\mathbf{0.216}$  & $\mathbf{0.084}$  & $\mathbf{0.193}$ & $\mathbf{0.080}$  & $\mathbf{0.151}$  & $\mathbf{0.077}$  \\
                  & $(0.192)$         & $(0.023)$         & $(0.062)$        & $(0.013)$         & $(0.040)$        & $(0.016)$         & $(0.062)$         & $(0.013)$         & $(0.062)$        & $(0.013)$         & $(0.046)$         & $(0.015)$         \\
\quad vocational school & $0.268$           & $\mathbf{0.111}$  & $\mathbf{0.234}$ & $\mathbf{0.091}$  & $\mathbf{0.227}$ & $0.036$           & $\mathbf{0.238}$  & $\mathbf{0.067}$  & $\mathbf{0.234}$ & $\mathbf{0.091}$  & $\mathbf{0.227}$  & $\mathbf{0.120}$  \\
                  & $(0.246)$         & $(0.029)$         & $(0.082)$        & $(0.019)$         & $(0.056)$        & $(0.026)$         & $(0.082)$         & $(0.019)$         & $(0.082)$        & $(0.019)$         & $(0.065)$         & $(0.024)$         \\
\quad high school       & $\mathbf{0.778}$  & $\mathbf{0.122}$  & $\mathbf{0.458}$ & $\mathbf{0.124}$  & $\mathbf{0.347}$ & $\mathbf{0.090}$  & $\mathbf{0.462}$  & $\mathbf{0.123}$  & $\mathbf{0.458}$ & $\mathbf{0.124}$  & $\mathbf{0.387}$  & $\mathbf{0.126}$  \\
                  & $(0.192)$         & $(0.022)$         & $(0.064)$        & $(0.015)$         & $(0.045)$        & $(0.021)$         & $(0.064)$         & $(0.015)$         & $(0.064)$        & $(0.015)$         & $(0.049)$         & $(0.017)$         \\
\quad university        & $0.350$           & $\mathbf{0.125}$  & $\mathbf{0.392}$ & $\mathbf{0.151}$  & $\mathbf{0.444}$ & $\mathbf{0.168}$  & $\mathbf{0.424}$  & $\mathbf{0.154}$  & $\mathbf{0.392}$ & $\mathbf{0.151}$  & $\mathbf{0.333}$  & $\mathbf{0.143}$  \\
                  & $(0.212)$         & $(0.027)$         & $(0.075)$        & $(0.019)$         & $(0.061)$        & $(0.032)$         & $(0.075)$         & $(0.019)$         & $(0.075)$        & $(0.019)$         & $(0.061)$         & $(0.023)$         \\
Employment status \\
\quad self-employed     & $\mathbf{1.974}$  & $-0.016$          & $\mathbf{0.889}$ & $-0.005$          & $\mathbf{0.748}$ & $0.007$           & $\mathbf{0.932}$  & $\mathbf{-0.038}$ & $\mathbf{0.889}$ & $-0.005$          & $\mathbf{0.794}$  & $0.038$           \\
                  & $(0.178)$         & $(0.024)$         & $(0.058)$        & $(0.018)$         & $(0.056)$        & $(0.030)$         & $(0.058)$         & $(0.018)$         & $(0.058)$        & $(0.018)$         & $(0.051)$         & $(0.020)$         \\
\quad not-employed      & $\mathbf{1.950}$  & $\mathbf{0.070}$  & $\mathbf{0.685}$ & $\mathbf{0.043}$  & $\mathbf{0.437}$ & $0.026$           & $\mathbf{0.676}$  & $0.032$           & $\mathbf{0.685}$ & $\mathbf{0.043}$  & $\mathbf{0.617}$  & $\mathbf{0.062}$  \\
                  & $(0.230)$         & $(0.026)$         & $(0.074)$        & $(0.016)$         & $(0.051)$        & $(0.024)$         & $(0.074)$         & $(0.016)$         & $(0.074)$        & $(0.016)$         & $(0.061)$         & $(0.019)$         \\
Geographical area \\
\quad centre            & $\mathbf{0.681}$  & $\mathbf{0.043}$  & $\mathbf{0.237}$ & $\mathbf{0.036}$  & $\mathbf{0.125}$ & $0.029$           & $\mathbf{0.301}$  & $\mathbf{0.059}$  & $\mathbf{0.237}$ & $\mathbf{0.036}$  & $\mathbf{0.148}$  & $\mathbf{0.027}$  \\
                  & $(0.134)$         & $(0.014)$         & $(0.044)$        & $(0.010)$         & $(0.034)$        & $(0.016)$         & $(0.044)$         & $(0.010)$         & $(0.044)$        & $(0.010)$         & $(0.035)$         & $(0.012)$         \\
\quad south and islands & $\mathbf{1.181}$  & $\mathbf{-0.045}$ & $\mathbf{0.299}$ & $\mathbf{-0.059}$ & $\mathbf{0.117}$ & $\mathbf{-0.040}$ & $\mathbf{0.353}$  & $\mathbf{-0.052}$ & $\mathbf{0.299}$ & $\mathbf{-0.059}$ & $\mathbf{0.194}$  & $\mathbf{-0.060}$ \\
                  & $(0.152)$         & $(0.019)$         & $(0.046)$        & $(0.010)$         & $(0.030)$        & $(0.014)$         & $(0.046)$         & $(0.010)$         & $(0.046)$        & $(0.010)$         & $(0.037)$         & $(0.012)$         \\
\bottomrule
\end{tabular}}
\caption{\footnotesize \redd{Unconditional and conditional regression coefficient estimates obtained from the nonparametric method for the RIF at the investigated $\tau$ levels and direction $\bs u = (\frac{1}{\sqrt{2}},\frac{1}{\sqrt{2}})'$, using the optimal tuning constant $c = c^\star$. Standard errors are computed via nonparametric bootstrap using 1000 resamples and parameter estimates are displayed in boldface when significant at the standard 5\% level.}}\label{tab:coefmcstar_spline}
\end{table}

\end{document}